\colorlet{myPurple}{blue!40!red}
\colorlet{myCyan}{cyan!60!gray}
\colorlet{myRed}{blue!55!gray}
\pgfplotsset{compat=1.14}
\newcommand{\sket}[1]{{\ensuremath{\lvert#1\rangle}}}
\newcommand{\lket}[1]{{\ensuremath{\left\lvert#1\right\rangle}}}
\newcommand{\ket}[1]{\if@display\lket{#1}\else\sket{#1}\fi}
\newcommand{\sbra}[1]{{\ensuremath{\langle#1\rvert}}}
\newcommand{\lbra}[1]{{\ensuremath{\left\langle#1\right\rvert}}}
\newcommand{\bra}[1]{\if@display\lbra{#1}\else\sbra{#1}\fi}
\newcommand{\sbraket}[2]{{\ensuremath{\langle#1\rvert#2\rangle}}}
\newcommand{\lbraket}[2]{{\ensuremath{\left\langle#1\!\left\rvert\vphantom{#1}#2\right.\!\right\rangle}}}
\newcommand{\braket}[2]{\if@display\lbraket{#1}{#2}\else\sbraket{#1}{#2}\fi}
\newcommand{\tr}{\textrm{tr}}
\newcommand{\cH}{\mathcal{H}}
\theoremstyle{plain}
\newtheorem{thm}{Theorem}
\newcommand{\Tr}[2]{\mathrm{tr}_{#2} \left\{ #1 \right\}}
\newcommand{\id}{\mathbb{I}}
\definecolor{quantumviolet}{RGB}{79, 4, 134}
\definecolor{googleblue}{RGB}{34, 0, 204}
\tikzstyle{infpoint}=[regular polygon,regular polygon sides=3,draw,scale=0.75,inner sep=-0.5pt,minimum width=9mm,fill=white,regular polygon rotate=90]
\tikzstyle{infcopoint}=[regular polygon,regular polygon sides=3,draw,scale=0.75,inner sep=-0.5pt,minimum width=9mm,fill=white,regular polygon rotate=270]
\tikzstyle{fMeas}=[regular polygon,regular polygon sides=3,draw,scale=0.75,inner sep=-0.5pt,minimum width=4mm,fill=white,regular polygon rotate=0,line width=1pt]
\tikzstyle{fPrep}=[regular polygon,regular polygon sides=3,draw,scale=0.75,inner sep=-0.5pt,minimum width=4mm,fill=white,regular polygon rotate=180,line width=1pt]
\tikzstyle{fTransition}=[fill=white,draw, line width = 1pt,inner sep=0.6mm,font=\footnotesize,minimum height=2mm,minimum width=2mm]
 \tikzstyle{rightground}=[circuit ee IEC,thick,ground,rotate=0,xscale=2.5,yscale=2]
\tikzstyle{bwSpider}=[
 \tikzstyle{wbSpider}=[
\tikzstyle{qWire}=[line width = 1pt, color=quantumviolet]
\tikzstyle{pWire}=[line width = 1pt, color=red!60!black]
\tikzstyle{gWire}=[line width = 1.5pt, color=green!60!black]
\tikzstyle{cWire}=[color=quantumgray,thin]
\tikzstyle{env}=[copoint,regular polygon rotate=0,minimum width=0.2cm, fill=black]
\tikzstyle{probs}=[shape=semicircle,fill=white,draw=black,shape border rotate=180,minimum width=1.2cm]
\tikzstyle{every picture}=[baseline=-0.25em,scale=0.5]
\tikzstyle{dotpic}=[] 
\tikzstyle{diredges}=[every to/.style={diredge}]
\tikzstyle{math matrix}=[matrix of math nodes,left delimiter=(,right delimiter=),inner sep=2pt,column sep=1em,row sep=0.5em,nodes={inner sep=0pt},text height=1.5ex, text depth=0.25ex]
\tikzstyle{inline text}=[text height=1.5ex, text depth=0.25ex,yshift=0.5mm]
\tikzstyle{label}=[font=\footnotesize,text height=1.5ex, text depth=0.25ex,yshift=0.5mm]
\tikzstyle{left label}=[label,anchor=east,xshift=1.5mm]
\tikzstyle{right label}=[label,anchor=west,xshift=-1mm]
\tikzstyle{up label}=[label,anchor=south,yshift=-1mm]
\tikzstyle{braceedge}=[decorate,decoration={brace,amplitude=2mm,raise=-1mm}]
\tikzstyle{small braceedge}=[decorate,decoration={brace,amplitude=1mm,raise=-1mm}]
\tikzstyle{doubled}=[line width=1.6pt] 
\tikzstyle{boldedge}=[doubled,shorten <=-0.17mm,shorten >=-0.17mm]
\tikzstyle{boldedgegray}=[doubled,gray,shorten <=-0.17mm,shorten >=-0.17mm]
\tikzstyle{singleedgegray}=[gray]
\tikzstyle{semidoubled}=[line width=1.4pt] 
\tikzstyle{semiboldedgegray}=[semidoubled,gray,shorten <=-0.17mm,shorten >=-0.17mm]
\tikzstyle{boxedge}=[semiboldedgegray]
\tikzstyle{boldedgedashed}=[very thick,dashed,shorten <=-0.17mm,shorten >=-0.17mm]
\tikzstyle{vboldedgedashed}=[doubled,dashed,shorten <=-0.17mm,shorten >=-0.17mm]
\tikzstyle{left hook arrow}=[left hook-latex]
\tikzstyle{right hook arrow}=[right hook-latex]
\tikzstyle{sembracket}=[line width=0.5pt,shorten <=-0.07mm,shorten >=-0.07mm]
\tikzstyle{causal edge}=[->,thick,gray]
\tikzstyle{causal nondir}=[thick,gray]
\tikzstyle{timeline}=[thick,gray, dashed]
\tikzstyle{cedge}=[<->,thick,gray!70!white]
\tikzstyle{empty diagram}=[draw=gray!40!white,dashed,shape=rectangle,minimum width=1cm,minimum height=1cm]
\tikzstyle{empty diagram small}=[draw=gray!50!white,dashed,shape=rectangle,minimum width=0.6cm,minimum height=0.5cm]
\tikzstyle{dot}=[inner sep=0mm,minimum width=2mm,minimum height=2mm,draw,shape=circle]
\tikzstyle{leak}=[white dot, shape=regular polygon, minimum size=3.3 mm, regular polygon sides=3, outer sep=-0.2mm, regular polygon rotate=270]
\tikzstyle{proj}=[draw,fill=white,chamfered rectangle,chamfered rectangle angle=30, minimum width=2mm, minimum height= 1mm,scale=0.5, outer sep=-0.2mm]
\tikzstyle{wide proj}=[draw,fill=white,chamfered rectangle,chamfered rectangle angle=30, minimum width=15mm,minimum height=1mm,scale=0.5, outer sep=-0.2mm]
\tikzstyle{very wide proj}=[draw,fill=white,chamfered rectangle,chamfered rectangle angle=30, minimum width=25mm,minimum height=1mm,scale=0.5, outer sep=-0.2mm]
\tikzstyle{very very wide proj}=[draw,fill=white,chamfered rectangle,chamfered rectangle angle=30, minimum width=35mm,minimum height=1mm,scale=0.5, outer sep=-0.2mm]
\tikzstyle{preleak}=[proj]
\tikzstyle{split proj out}=[regular polygon,regular polygon sides=3,draw,scale=0.75,inner sep=-0.5pt,minimum width=3.3mm,fill=white,regular polygon rotate=180]
\tikzstyle{split proj in}=[regular polygon,regular polygon sides=3,draw,scale=0.75,inner sep=-0.5pt,minimum width=3.3mm,fill=white]
\tikzstyle{Vleak}=[white dot, shape=regular polygon, minimum size=3.3 mm, regular polygon sides=3, outer sep=-0.2mm, regular polygon rotate=90]
\tikzstyle{dleak}=[white dot, line width=1.6pt, shape=regular polygon, minimum size=3.3 mm, regular polygon sides=3, outer sep=-0.2mm, regular polygon rotate=270]
\tikzstyle{Wsquare}=[white dot, shape=regular polygon, rounded corners=0.8 mm, minimum size=3.3 mm, regular polygon sides=3, outer sep=-0.2mm]
\tikzstyle{Wsquareadj}=[white dot, shape=regular polygon, rounded corners=0.8 mm, minimum size=3.3 mm, regular polygon sides=3, outer sep=-0.2mm, regular polygon rotate=180]
\tikzstyle{ddot}=[inner sep=0mm, doubled, minimum width=2.5mm,minimum height=2.5mm,draw,shape=circle]
\tikzstyle{black dot}=[dot,fill=black]
\tikzstyle{white dot}=[dot,fill=white,,text depth=-0.2mm]
\tikzstyle{white Wsquare}=[Wsquare,fill=gray,,text depth=-0.2mm]
\tikzstyle{white Wsquareadj}=[Wsquareadj,fill=white,,text depth=-0.2mm]
\tikzstyle{green dot}=[white dot] 
\tikzstyle{gray dot}=[dot,fill=gray!40!white,,text depth=-0.2mm]
\tikzstyle{red dot}=[gray dot] 
\tikzstyle{black ddot}=[ddot,fill=black]
\tikzstyle{white ddot}=[ddot,fill=white]
\tikzstyle{gray ddot}=[ddot,fill=gray!40!white]
\tikzstyle{gray edge}=[gray!60!white]
\tikzstyle{small dot}=[inner sep=0.5mm,minimum width=0pt,minimum height=0pt,draw,shape=circle]
\tikzstyle{small black dot}=[small dot,fill=black]
\tikzstyle{small white dot}=[small dot,fill=white]
\tikzstyle{small gray dot}=[small dot,fill=gray!40!white]
\tikzstyle{causal dot}=[inner sep=0.4mm,minimum width=0pt,minimum height=0pt,draw=white,shape=circle,fill=gray!40!white]
\tikzstyle{phase dimensions}=[minimum size=5mm,font=\footnotesize,rectangle,rounded corners=2.5mm,inner sep=0.2mm,outer sep=-2mm]
\tikzstyle{dphase dimensions}=[minimum size=5mm,font=\footnotesize,rectangle,rounded corners=2.5mm,inner sep=0.2mm,outer sep=-2mm]
\tikzstyle{white phase dot}=[dot,fill=white,phase dimensions]
\tikzstyle{white phase ddot}=[ddot,fill=white,dphase dimensions]
\tikzstyle{white rect ddot}=[draw=black,fill=white,doubled,minimum size=5mm,font=\footnotesize,rectangle,rounded corners=2.5mm,inner sep=0.2mm]
\tikzstyle{gray rect ddot}=[draw=black,fill=gray!40!white,doubled,minimum size=6mm,font=\footnotesize,rectangle,rounded corners=3mm]
\tikzstyle{gray phase dot}=[dot,fill=gray!40!white,phase dimensions]
\tikzstyle{gray phase ddot}=[ddot,fill=gray!40!white,dphase dimensions]
\tikzstyle{grey phase dot}=[gray phase dot]
\tikzstyle{grey phase ddot}=[gray phase ddot]
\tikzstyle{small phase dimensions}=[minimum size=4mm,font=\tiny,rectangle,rounded corners=2mm,inner sep=0.2mm,outer sep=-2mm]
\tikzstyle{small dphase dimensions}=[minimum size=4mm,font=\tiny,rectangle,rounded corners=2mm,inner sep=0.2mm,outer sep=-2mm]
\tikzstyle{small gray phase dot}=[dot,fill=gray!40!white,small phase dimensions]
\tikzstyle{small gray phase ddot}=[ddot,fill=gray!40!white,small dphase dimensions]
\tikzstyle{small map}=[draw,shape=rectangle,minimum height=4mm,minimum width=4mm,fill=white]
\tikzstyle{cnot}=[fill=white,shape=circle,inner sep=-1.4pt]
\tikzstyle{asym hadamard}=[fill=white,draw,shape=NEbox,inner sep=0.6mm,font=\footnotesize,minimum height=4mm]
\tikzstyle{asym hadamard conj}=[fill=white,draw,shape=NWbox,inner sep=0.6mm,font=\footnotesize,minimum height=4mm]
\tikzstyle{asym hadamard dag}=[fill=white,draw,shape=SEbox,inner sep=0.6mm,font=\footnotesize,minimum height=4mm]
\tikzstyle{hadamard}=[fill=white,draw,inner sep=0.6mm,font=\footnotesize,minimum height=4mm,minimum width=4mm]
\tikzstyle{small hadamard}=[fill=white,draw,inner sep=0.6mm,minimum height=1.5mm,minimum width=1.5mm]
\tikzstyle{small hadamard rotate}=[small hadamard,rotate=45]
\tikzstyle{dhadamard}=[hadamard,doubled]
\tikzstyle{small dhadamard}=[small hadamard,doubled]
\tikzstyle{small dhadamard rotate}=[small hadamard rotate,doubled]
\tikzstyle{antipode}=[white dot,inner sep=0.3mm,font=\footnotesize]
\tikzstyle{scalar}=[diamond,draw,inner sep=0.5pt,font=\small]
\tikzstyle{dscalar}=[diamond,doubled, draw,inner sep=0.5pt,font=\small]
\tikzstyle{small box}=[rectangle,inline text,fill=white,draw,minimum height=5mm,yshift=-0.5mm,minimum width=5mm,font=\small]
\tikzstyle{small gray box}=[small box,fill=gray!30]
\tikzstyle{medium box}=[rectangle,inline text,fill=white,draw,minimum height=5mm,yshift=-0.5mm,minimum width=10mm,font=\small]
\tikzstyle{square box}=[small box] 
\tikzstyle{medium gray box}=[small box,fill=gray!30]
\tikzstyle{semilarge box}=[rectangle,inline text,fill=white,draw,minimum height=5mm,yshift=-0.5mm,minimum width=12.5mm,font=\small]
\tikzstyle{large box}=[rectangle,inline text,fill=white,draw,minimum height=5mm,yshift=-0.5mm,minimum width=15mm,font=\small]
\tikzstyle{large gray box}=[small box,fill=gray!30]
\tikzstyle{Bayes box}=[rectangle,fill=black,draw, minimum height=3mm, minimum width=3mm]
\tikzstyle{gray square point}=[small box,fill=gray!50]
\tikzstyle{dphase box white}=[dhadamard]
\tikzstyle{dphase box gray}=[dhadamard,fill=gray!50!white]
\tikzstyle{phase box white}=[hadamard]
\tikzstyle{phase box gray}=[hadamard,fill=gray!50!white]
\tikzstyle{point}=[regular polygon,regular polygon sides=3,draw,scale=0.75,inner sep=-0.5pt,minimum width=9mm,fill=white,regular polygon rotate=180]
\tikzstyle{point nosep}=[regular polygon,regular polygon sides=3,draw,scale=0.75,inner sep=-2pt,minimum width=9mm,fill=white,regular polygon rotate=180]
\tikzstyle{copoint}=[regular polygon,regular polygon sides=3,draw,scale=0.75,inner sep=-0.5pt,minimum width=9mm,fill=white]
\tikzstyle{dpoint}=[point,doubled]
\tikzstyle{dcopoint}=[copoint,doubled]
\tikzstyle{pointgrow}=[shape=cornerpoint,kpoint common,scale=0.75,inner sep=3pt]
\tikzstyle{pointgrow dag}=[shape=cornercopoint,kpoint common,scale=0.75,inner sep=3pt]
\tikzstyle{wide copoint}=[fill=white,draw,shape=isosceles triangle,shape border rotate=90,isosceles triangle stretches=true,inner sep=0pt,minimum width=1.5cm,minimum height=6.12mm]
\tikzstyle{wide point}=[fill=white,draw,shape=isosceles triangle,shape border rotate=-90,isosceles triangle stretches=true,inner sep=0pt,minimum width=1.5cm,minimum height=6.12mm,yshift=-0.0mm]
\tikzstyle{wide point plus}=[fill=white,draw,shape=isosceles triangle,shape border rotate=-90,isosceles triangle stretches=true,inner sep=0pt,minimum width=1.74cm,minimum height=7mm,yshift=-0.0mm]
\tikzstyle{wide dpoint}=[fill=white,doubled,draw,shape=isosceles triangle,shape border rotate=-90,isosceles triangle stretches=true,inner sep=0pt,minimum width=1.5cm,minimum height=6.12mm,yshift=-0.0mm]
\tikzstyle{tinypoint}=[regular polygon,regular polygon sides=3,draw,scale=0.55,inner sep=-0.15pt,minimum width=6mm,fill=white,regular polygon rotate=180]
\tikzstyle{white point}=[point]
\tikzstyle{white dpoint}=[dpoint]
\tikzstyle{green point}=[white point] 
\tikzstyle{white copoint}=[copoint]
\tikzstyle{gray point}=[point,fill=gray!40!white]
\tikzstyle{gray dpoint}=[gray point,doubled]
\tikzstyle{red point}=[gray point] 
\tikzstyle{gray copoint}=[copoint,fill=gray!40!white]
\tikzstyle{gray dcopoint}=[gray copoint,doubled]
\tikzstyle{white point guide}=[regular polygon,regular polygon sides=3,font=\scriptsize,draw,scale=0.65,inner sep=-0.5pt,minimum width=9mm,fill=white,regular polygon rotate=180]
\tikzstyle{black point}=[point,fill=black,font=\color{white}]
\tikzstyle{black copoint}=[copoint,fill=black,font=\color{white}]
\tikzstyle{tiny gray point}=[tinypoint,fill=gray!40!white]
\tikzstyle{diredge}=[->]
\tikzstyle{ddiredge}=[<->]
\tikzstyle{rdiredge}=[<-]
\tikzstyle{thickdiredge}=[->, very thick]
\tikzstyle{pointer edge}=[->,very thick,gray]
\tikzstyle{pointer edge part}=[very thick,gray]
\tikzstyle{dashed edge}=[dashed]
\tikzstyle{thick dashed edge}=[very thick,dashed]
\tikzstyle{thick gray dashed edge}=[thick dashed edge,gray!40]
\tikzstyle{thick map edge}=[very thick,|->]
\newcommand{\boxshape}[3]{%
\pgfdeclareshape{#1}{
\inheritsavedanchors[from=rectangle] 
\inheritanchorborder[from=rectangle]
\inheritanchor[from=rectangle]{center}
\inheritanchor[from=rectangle]{north}
\inheritanchor[from=rectangle]{south}
\inheritanchor[from=rectangle]{west}
\inheritanchor[from=rectangle]{east}
\backgroundpath{
\southwest \pgf@xa=\pgf@x \pgf@ya=\pgf@y
\northeast \pgf@xb=\pgf@x \pgf@yb=\pgf@y

\@tempdima=#2
\@tempdimb=#3

\pgfpathmoveto{\pgfpoint{\pgf@xa - 5pt + \@tempdima}{\pgf@ya}}
\pgfpathlineto{\pgfpoint{\pgf@xa - 5pt - \@tempdima}{\pgf@yb}}
\pgfpathlineto{\pgfpoint{\pgf@xb + 5pt + \@tempdimb}{\pgf@yb}}
\pgfpathlineto{\pgfpoint{\pgf@xb + 5pt - \@tempdimb}{\pgf@ya}}
\pgfpathlineto{\pgfpoint{\pgf@xa - 5pt + \@tempdima}{\pgf@ya}}
\pgfpathclose
}
}}
\tikzstyle{cloud}=[shape=cloud,draw,minimum width=1.5cm,minimum height=1.5cm]
\tikzstyle{map}=[draw,shape=NEbox,inner sep=1pt,minimum height=4mm,fill=white]
\tikzstyle{dashedmap}=[draw,dashed,shape=NEbox,inner sep=2pt,minimum height=6mm,fill=white]
\tikzstyle{mapdag}=[draw,shape=SEbox,inner sep=1pt,minimum height=4mm,fill=white]
\tikzstyle{mapadj}=[draw,shape=SEbox,inner sep=2pt,minimum height=6mm,fill=white]
\tikzstyle{maptrans}=[draw,shape=SWbox,inner sep=2pt,minimum height=6mm,fill=white]
\tikzstyle{mapconj}=[draw,shape=NWbox,inner sep=2pt,minimum height=6mm,fill=white]
\tikzstyle{medium map}=[draw,shape=NEbox,inner sep=2pt,minimum height=6mm,fill=white,minimum width=7mm]
\tikzstyle{medium map dag}=[draw,shape=SEbox,inner sep=2pt,minimum height=6mm,fill=white,minimum width=7mm]
\tikzstyle{medium map adj}=[draw,shape=SEbox,inner sep=2pt,minimum height=6mm,fill=white,minimum width=7mm]
\tikzstyle{medium map trans}=[draw,shape=SWbox,inner sep=2pt,minimum height=6mm,fill=white,minimum width=7mm]
\tikzstyle{medium map conj}=[draw,shape=NWbox,inner sep=2pt,minimum height=6mm,fill=white,minimum width=7mm]
\tikzstyle{semilarge map}=[draw,shape=NEbox,inner sep=2pt,minimum height=6mm,fill=white,minimum width=9.5mm]
\tikzstyle{semilarge map trans}=[draw,shape=SWbox,inner sep=2pt,minimum height=6mm,fill=white,minimum width=9.5mm]
\tikzstyle{semilarge map adj}=[draw,shape=SEbox,inner sep=2pt,minimum height=6mm,fill=white,minimum width=9.5mm]
\tikzstyle{semilarge map dag}=[draw,shape=SEbox,inner sep=2pt,minimum height=6mm,fill=white,minimum width=9.5mm]
\tikzstyle{semilarge map conj}=[draw,shape=NWbox,inner sep=2pt,minimum height=6mm,fill=white,minimum width=9.5mm]
\tikzstyle{large map}=[draw,shape=NEbox,inner sep=2pt,minimum height=6mm,fill=white,minimum width=12mm]
\tikzstyle{large map conj}=[draw,shape=NWbox,inner sep=2pt,minimum height=6mm,fill=white,minimum width=12mm]
\tikzstyle{very large map}=[draw,shape=NEbox,inner sep=2pt,minimum height=6mm,fill=white,minimum width=17mm]
\tikzstyle{medium dmap}=[draw,doubled,shape=NEbox,inner sep=2pt,minimum height=6mm,fill=white,minimum width=7mm]
\tikzstyle{medium dmap dag}=[draw,doubled,shape=SEbox,inner sep=2pt,minimum height=6mm,fill=white,minimum width=7mm]
\tikzstyle{medium dmap adj}=[draw,doubled,shape=SEbox,inner sep=2pt,minimum height=6mm,fill=white,minimum width=7mm]
\tikzstyle{medium dmap trans}=[draw,doubled,shape=SWbox,inner sep=2pt,minimum height=6mm,fill=white,minimum width=7mm]
\tikzstyle{medium dmap conj}=[draw,doubled,shape=NWbox,inner sep=2pt,minimum height=6mm,fill=white,minimum width=7mm]
\tikzstyle{semilarge dmap}=[draw,doubled,shape=NEbox,inner sep=2pt,minimum height=6mm,fill=white,minimum width=9.5mm]
\tikzstyle{semilarge dmap trans}=[draw,doubled,shape=SWbox,inner sep=2pt,minimum height=6mm,fill=white,minimum width=9.5mm]
\tikzstyle{semilarge dmap adj}=[draw,doubled,shape=SEbox,inner sep=2pt,minimum height=6mm,fill=white,minimum width=9.5mm]
\tikzstyle{semilarge dmap dag}=[draw,doubled,shape=SEbox,inner sep=2pt,minimum height=6mm,fill=white,minimum width=9.5mm]
\tikzstyle{semilarge dmap conj}=[draw,doubled,shape=NWbox,inner sep=2pt,minimum height=6mm,fill=white,minimum width=9.5mm]
\tikzstyle{large dmap}=[draw,doubled,shape=NEbox,inner sep=2pt,minimum height=6mm,fill=white,minimum width=12mm]
\tikzstyle{large dmap conj}=[draw,doubled,shape=NWbox,inner sep=2pt,minimum height=6mm,fill=white,minimum width=12mm]
\tikzstyle{large dmap trans}=[draw,doubled,shape=SWbox,inner sep=2pt,minimum height=6mm,fill=white,minimum width=12mm]
\tikzstyle{large dmap adj}=[draw,doubled,shape=SEbox,inner sep=2pt,minimum height=6mm,fill=white,minimum width=12mm]
\tikzstyle{large dmap dag}=[draw,doubled,shape=SEbox,inner sep=2pt,minimum height=6mm,fill=white,minimum width=12mm]
\tikzstyle{very large dmap}=[draw,doubled,shape=NEbox,inner sep=2pt,minimum height=6mm,fill=white,minimum width=19.5mm]
\tikzstyle{muxbox}=[draw,shape=rectangle,minimum height=3mm,minimum width=3mm,fill=white]
\tikzstyle{dmuxbox}=[muxbox,doubled]
\tikzstyle{box}=[draw,shape=rectangle,inner sep=2pt,minimum height=6mm,minimum width=6mm,fill=white]
\tikzstyle{dbox}=[draw,doubled,shape=rectangle,inner sep=2pt,minimum height=6mm,minimum width=6mm,fill=white]
\tikzstyle{dmap}=[draw,doubled,shape=NEbox,inner sep=2pt,minimum height=6mm,fill=white]
\tikzstyle{dmapdag}=[draw,doubled,shape=SEbox,inner sep=2pt,minimum height=6mm,fill=white]
\tikzstyle{dmapadj}=[draw,doubled,shape=SEbox,inner sep=2pt,minimum height=6mm,fill=white]
\tikzstyle{dmaptrans}=[draw,doubled,shape=SWbox,inner sep=2pt,minimum height=6mm,fill=white]
\tikzstyle{dmapconj}=[draw,doubled,shape=NWbox,inner sep=2pt,minimum height=6mm,fill=white]
\tikzstyle{ddmap}=[draw,doubled,dashed,shape=NEbox,inner sep=2pt,minimum height=6mm,fill=white]
\tikzstyle{ddmapdag}=[draw,doubled,dashed,shape=SEbox,inner sep=2pt,minimum height=6mm,fill=white]
\tikzstyle{ddmapadj}=[draw,doubled,dashed,shape=SEbox,inner sep=2pt,minimum height=6mm,fill=white]
\tikzstyle{ddmaptrans}=[draw,doubled,dashed,shape=SWbox,inner sep=2pt,minimum height=6mm,fill=white]
\tikzstyle{ddmapconj}=[draw,doubled,dashed,shape=NWbox,inner sep=2pt,minimum height=6mm,fill=white]
\tikzstyle{smap}=[draw,shape=sNEbox,fill=white]
\tikzstyle{smapdag}=[draw,shape=sSEbox,fill=white]
\tikzstyle{smapadj}=[draw,shape=sSEbox,fill=white]
\tikzstyle{smaptrans}=[draw,shape=sSWbox,fill=white]
\tikzstyle{smapconj}=[draw,shape=sNWbox,fill=white]
\tikzstyle{dsmap}=[draw,dashed,shape=sNEbox,fill=white]
\tikzstyle{dsmapdag}=[draw,dashed,shape=sSEbox,fill=white]
\tikzstyle{dsmaptrans}=[draw,dashed,shape=sSWbox,fill=white]
\tikzstyle{dsmapconj}=[draw,dashed,shape=sNWbox,fill=white]
\tikzstyle{mmap}=[draw,shape=mNEbox]
\tikzstyle{mmapdag}=[draw,shape=mSEbox]
\tikzstyle{mmaptrans}=[draw,shape=mSWbox]
\tikzstyle{mmapconj}=[draw,shape=mNWbox]
\tikzstyle{mmapgray}=[draw,fill=gray!40!white,shape=mNEbox]
\tikzstyle{smapgray}=[draw,fill=gray!40!white,shape=sNEbox]
\pgfmathsetmacro{\pgf@shorten@left}{\pgfkeysvalueof{/tikz/shorten left}}
\pgfmathsetmacro{\pgf@shorten@right}{\pgfkeysvalueof{/tikz/shorten right}}
\pgfmathsetmacro{\pgf@shorten@left}{\pgfkeysvalueof{/tikz/shorten left}}
\pgfmathsetmacro{\pgf@shorten@right}{\pgfkeysvalueof{/tikz/shorten right}}
\tikzstyle{kpoint common}=[draw,fill=white,inner sep=1pt,minimum height=4mm]
\tikzstyle{kpoint sc}=[shape=cornerpoint,kpoint common]
\tikzstyle{kpoint adjoint sc}=[shape=cornercopoint,kpoint common]
\tikzstyle{kpoint}=[shape=cornerpoint,shorten left=5pt,kpoint common]
\tikzstyle{kpoint adjoint}=[shape=cornercopoint,shorten left=5pt,kpoint common]
\tikzstyle{kpoint conjugate}=[shape=cornerpoint,shorten right=5pt,kpoint common]
\tikzstyle{kpoint transpose}=[shape=cornercopoint,shorten right=5pt,kpoint common]
\tikzstyle{kpoint symm}=[shape=cornerpoint,shorten left=5pt,shorten right=5pt,kpoint common]
\tikzstyle{wide kpoint sc}=[shape=cornerpoint,kpoint common, minimum width=1 cm]
\tikzstyle{wide kpointdag sc}=[shape=cornercopoint,kpoint common, minimum width=1 cm]
\tikzstyle{black kpoint}=[shape=cornerpoint,shorten left=5pt,kpoint common,fill=black,font=\color{white}]
\tikzstyle{black kpoint sm}=[shape=cornerpoint,shorten left=5pt,kpoint common,fill=black,font=\color{white},scale=0.75]
\tikzstyle{black kpoint adjoint}=[shape=cornercopoint,shorten left=5pt,kpoint common,fill=black,font=\color{white}]
\tikzstyle{black kpointadj}=[shape=cornercopoint,shorten left=5pt,kpoint common,fill=black,font=\color{white}]
\tikzstyle{black kpointadj sm}=[shape=cornercopoint,shorten left=5pt,kpoint common,fill=black,font=\color{white},scale=0.75]
\tikzstyle{black dkpoint}=[shape=cornerpoint,shorten left=5pt,kpoint common,fill=black, doubled,font=\color{white}]
\tikzstyle{black dkpoint adjoint}=[shape=cornercopoint,shorten left=5pt,kpoint common,fill=black, doubled,font=\color{white}]
\tikzstyle{black dkpointadj}=[shape=cornercopoint,shorten left=5pt,kpoint common,fill=black, doubled,font=\color{white}]
\tikzstyle{black dkpoint sm}=[shape=cornerpoint,shorten left=5pt,kpoint common,fill=black, doubled,font=\color{white},scale=0.75]
\tikzstyle{black dkpointadj sm}=[shape=cornercopoint,shorten left=5pt,kpoint common,fill=black, doubled,font=\color{white},scale=0.75]
\tikzstyle{kpointdag}=[kpoint adjoint]
\tikzstyle{kpointadj}=[kpoint adjoint]
\tikzstyle{kpointconj}=[kpoint conjugate]
\tikzstyle{kpointtrans}=[kpoint transpose]
\tikzstyle{big kpoint}=[kpoint, minimum width=1.2 cm, minimum height=8mm, inner sep=4pt, text depth=3mm]
\tikzstyle{wide kpoint}=[kpoint, minimum width=1 cm, inner sep=2pt]
\tikzstyle{wide kpointdag}=[kpointdag, minimum width=1 cm, inner sep=2pt]
\tikzstyle{wide kpointconj}=[kpointconj, minimum width=1 cm, inner sep=2pt]
\tikzstyle{wide kpointtrans}=[kpointtrans, minimum width=1 cm, inner sep=2pt]
\tikzstyle{wider kpoint}=[kpoint, minimum width=1.25 cm, inner sep=2pt]
\tikzstyle{wider kpointdag}=[kpointdag, minimum width=1.25 cm, inner sep=2pt]
\tikzstyle{wider kpointconj}=[kpointconj, minimum width=1.25 cm, inner sep=2pt]
\tikzstyle{wider kpointtrans}=[kpointtrans, minimum width=1.25 cm, inner sep=2pt]
\tikzstyle{gray kpoint}=[kpoint,fill=gray!50!white]
\tikzstyle{gray kpointdag}=[kpointdag,fill=gray!50!white]
\tikzstyle{gray kpointadj}=[kpointadj,fill=gray!50!white]
\tikzstyle{gray kpointconj}=[kpointconj,fill=gray!50!white]
\tikzstyle{gray kpointtrans}=[kpointtrans,fill=gray!50!white]
\tikzstyle{gray dkpoint}=[kpoint,fill=gray!50!white,doubled]
\tikzstyle{gray dkpointdag}=[kpointdag,fill=gray!50!white,doubled]
\tikzstyle{gray dkpointadj}=[kpointadj,fill=gray!50!white,doubled]
\tikzstyle{gray dkpointconj}=[kpointconj,fill=gray!50!white,doubled]
\tikzstyle{gray dkpointtrans}=[kpointtrans,fill=gray!50!white,doubled]
\tikzstyle{white label}=[draw,fill=white,rectangle,inner sep=0.7 mm]
\tikzstyle{gray label}=[draw,fill=gray!50!white,rectangle,inner sep=0.7 mm]
\tikzstyle{black label}=[draw,fill=black,rectangle,inner sep=0.7 mm]
\tikzstyle{dkpoint}=[kpoint,doubled]
\tikzstyle{wide dkpoint}=[wide kpoint,doubled]
\tikzstyle{dkpointdag}=[kpoint adjoint,doubled]
\tikzstyle{wide dkpointdag}=[wide kpointdag,doubled]
\tikzstyle{dkcopoint}=[kpoint adjoint,doubled]
\tikzstyle{dkpointadj}=[kpoint adjoint,doubled]
\tikzstyle{dkpointconj}=[kpoint conjugate,doubled]
\tikzstyle{dkpointtrans}=[kpoint transpose,doubled]
\tikzstyle{kscalar}=[kpoint common, shape=EBox, inner xsep=-1pt, inner ysep=3pt,font=\small]
\tikzstyle{kscalarconj}=[kpoint common, shape=WBox, inner xsep=-1pt, inner ysep=3pt,font=\small]
\tikzstyle{spekpoint}=[kpoint sc,minimum height=5mm,inner sep=3pt]
\tikzstyle{spekcopoint}=[kpoint adjoint sc,minimum height=5mm,inner sep=3pt]
\tikzstyle{dspekpoint}=[spekpoint,doubled]
\tikzstyle{dspekcopoint}=[spekcopoint,doubled]
 \tikzstyle{upground}=[circuit ee IEC,thick,ground,rotate=90,scale=2.5]
 \tikzstyle{downground}=[circuit ee IEC,thick,ground,rotate=-90,scale=2.5]
 \tikzstyle{bigground}=[regular polygon,regular polygon sides=3,draw=gray,scale=0.50,inner sep=-0.5pt,minimum width=10mm,fill=gray]
\tikzstyle{arrs}=[-latex,font=\small,auto]
\tikzstyle{arrow plain}=[arrs]
\tikzstyle{arrow dashed}=[dashed,arrs]
\tikzstyle{arrow bold}=[very thick,arrs]
\tikzstyle{arrow hide}=[draw=white!0,-]
\tikzstyle{arrow reverse}=[latex-]
\tikzstyle{cdnode}=[]
\let\olddagger\dagger
\renewcommand{\dagger}{\ensuremath{\olddagger}\xspace}
\theoremstyle{definition}
\newtheorem{example*}[theorem]{Example*}
\newtheorem{examples*}[theorem]{Examples*}
\newtheorem{remark*}[theorem]{Remark*}
\theoremstyle{plain}
\def\bR{\begin{color}{red}}
\def\bB{\begin{color}{blue}}
\def\bM{\begin{color}{magenta}}
\def\bC{\begin{color}{cyan}}
\def\bW{\begin{color}{white}}
\def\bBl{\begin{color}{black}}
\def\bG{\begin{color}{green}}
\def\bY{\begin{color}{yellow}}
\def\e{\end{color}\xspace}
\newcommand{\bit}{\begin{itemize}}
\newcommand{\eit}{\end{itemize}\par\noindent}
\newcommand{\ben}{\begin{enumerate}}
\newcommand{\een}{\end{enumerate}\par\noindent}
\newcommand{\beq}{\begin{equation}}
\newcommand{\eeq}{\end{equation}\par\noindent}
\newcommand{\beqa}{\begin{eqnarray*}}
\newcommand{\eeqa}{\end{eqnarray*}\par\noindent}
\newcommand{\beqn}{\begin{eqnarray}}
\newcommand{\eeqn}{\end{eqnarray}\par\noindent}
\def\jR{\begin{color}{DarkRed}}
\def\jB{\begin{color}{MidnightBlue}}
\def\jM{\begin{color}{magenta}}
\def\jC{\begin{color}{cyan}}
\def\jW{\begin{color}{white}}
\def\jBl{\begin{color}{black}}
\def\jG{\begin{color}{green}}
\def\jY{\begin{color}{yellow}}
\def\cR{\begin{color}{Crimson}}
\def\cB{\begin{color}{cyan}}
\def\cM{\begin{color}{magenta}}
\def\cC{\begin{color}{cyan}}
\def\cW{\begin{color}{white}}
\def\cBl{\begin{color}{black}}
\def\cG{\begin{color}{green}}
\def\cY{\begin{color}{yellow}}
\begin{document}

\title{Activation of post-quantum steering}

\author{Ana Bel\'en Sainz}
\email{ana.sainz@ug.edu.pl}
\affiliation{International Centre for Theory of Quantum Technologies, University of  Gda{\'n}sk, 80-309 Gda{\'n}sk, Poland}
\affiliation{Basic Research Community for Physics e.V., Germany}
\author{Paul Skrzypczyk}
\affiliation{H.~H.~Wills Physics Laboratory, University of Bristol, Tyndall Avenue, Bristol, BS8 1TL, United Kingdom}
\affiliation{CIFAR Azrieli Global Scholars program, CIFAR, Toronto, Canada}
\author{Matty J.~Hoban}
\affiliation{Quantum Group, Department of Computer Science, University of Oxford, Wolfson Building, Parks Road, Oxford, OX1 3QD, United Kingdom}

\begin{abstract}
There are possible physical theories that give greater violations of Bell's inequalities than the corresponding Tsirelson bound, termed post-quantum non-locality. Such theories do not violate special relativity, but could give an advantage in certain information processing tasks. There is another way in which entangled quantum states exhibit non-classical phenomena, with one notable example being Einstein-Podolsky-Rosen (EPR) steering; a violation of a bipartite Bell inequality implies EPR steering, but the converse is not necessarily true. The study of post-quantum EPR steering is more intricate, but it has been shown that it does not always imply post-quantum non-locality in a conventional Bell test. In this work we show how to distribute resources in a larger network that individually do not demonstrate post-quantum non-locality but violate a Tsirelson bound for the network. That is, we show how to activate post-quantum steering so that it can now be witnessed as post-quantum correlations in a Bell scenario. One element of our work that may be of independent interest is we show how to self-test a bipartite quantum assemblage in a network, even assuming post-quantum resources.
\end{abstract}

\maketitle

In a Bell test, local measurements on entangled quantum states can produce statistics that cannot be simulated classically without communication.  This is usually witnessed by the violation of a Bell inequality \cite{Bell_1964,BellReview,scarani2019bell}. In such a setting, there also exist probability distributions that cannot be simulated \textit{quantumly} without communication, i.e., not even local measurements on any entangled quantum state can produce the statistics.  These (mathematically allowed) statistics that cannot be simulated quantumly are called \textit{post-quantum correlations}  \cite{popescu2014nonlocality}. Notably, there are post-quantum correlations that cannot be used to communicate: they satisfy the no-signalling principle \cite{popescu1994quantum}. This led to a line of research examining the consequences of such non-signalling, post-quantum correlations \cite{van2013implausible,brassard2006limit,navascues2010glance,pawlowski2009information,fritz2013local}.~Interestingly, one can examine these consequences independent of physical realisations of the correlations, in a largely theory-independent manner; this is called the \textit{device-independent approach} \cite{mayers1998quantum,barrett2005no,acin2007device}.

While it has been fruitful to study correlations in such a device-independent manner, especially with applications to cryptography in mind, it could be perceived as quite limiting. We can have situations involving multiple systems where the behaviour of one of these systems is well characterised and ignorance of the physical system is inappropriate. One such setting is that of an Einstein-Podolsky-Rosen (EPR) experiment, or scenario \cite{einstein1935can,wiseman2007steering,reid2009colloquium,uola2020quantum}. 
 Traditionally, in such an experiment, one party (Alice) who has an uncharacterised device, tries to convince Bob that they share systems in an entangled quantum state. Alice can successfully convince Bob if he witnesses the phenomenon of ``EPR-steering". While this is implied by the violation of a Bell inequality, it is not equivalent to it \cite{quintino2015inequivalence}: there are experiments that produce correlations which exhibit EPR-steering but do not violate any Bell inequality.  Quantum EPR-steering has been quite extensively studied, especially since its quantum-information reframing \cite{wiseman2007steering}, with some in-depth reviews and more recent results being captured in Refs.~\cite{reid2009colloquium,uola2020quantum,cavalcanti2016quantum,lee2025unveiling,bizzarri2024quantum,PhysRevA.110.012418}.

With the above in mind, Sainz\textit{ et.~al.}~asked whether there could ever exist demonstrations of EPR-steering which cannot be reproduced by local measurements on an entangled quantum state.  Inspired by the case of Bell scenarios, this type of EPR-steering that cannot be simulated quantumly was termed  ``post-quantum steering" \cite{sainz2015postquantum}. In the scenario that Ref.~\cite{sainz2015postquantum} studies,  one still assumes that Bob's device is quantum and well characterised, but Alice's device is not. This is in contrast to other options to study steering beyond quantum theory, where Bob's device is still characterised but now the states of his system are described according to a theory that goes beyond quantum theory (e.g., within the framework of generalised probabilistic theories (GPT) \cite{Hardy,GPT_Barrett}). While this GPT approach leads to interesting situations \cite{barnum2013ensemble,jenvcova2022assemblages}, the post-quantum approach pioneered by Ref.~\cite{sainz2015postquantum} has two main advantages. On the one hand, it enables one to explore the consequences of a locally-quantum world. On the other hand,  it opens the door to the question of how to certify that your experiment is locally-quantum yet not globally underpinned by a multipartite quantum state \cite{sainz2018formalism}.  
Coming back to the question of whether post-quantum steering exists for a scenario with a quantum Bob, it was well known that this isn't the case in the traditional setting (involving only Alice and Bob).  This is due to the Gisin-Hughston-Jozsa-Wootters (GHJW) theorem \cite{gisin1989stochastic,hughston1993complete}, which finds a quantum representation of any EPR-steering in such a bipartite scenario. However, as pointed out in Ref.~\cite{sainz2015postquantum}, if one considers generalised scenarios, involving, e.g.,  three separated parties where two of them use uncharacterised devices, then post-quantum steering is now possible. 
It was also later shown that other modifications to the standard EPR experiment can also accommodate post-quantum steering \cite{sainz2020bipartite,zjawin2022resource}; these include so-called Bob-with-Input scenarios and Instrumental EPR scenarios.  
Remarkably, Refs.~\cite{sainz2015postquantum,sainz2020bipartite,sainz2018formalism}  further showed that not only does post-quantum steering exist, but that it is distinct from post-quantum correlations: post-quantum steering can be exhibited \emph{without ever producing post-quantum correlations} in a naturally-associated Bell scenario. This indeed holds true for any choice or number of measurements that the parties (who were originally characterised in the EPR scenario but in this Bell scenario are now uncharacterised) could perform.

Since the publication of Ref.~\cite{sainz2015postquantum} subsequent work has been devoted to developing mathematical frameworks to explore the phenomenon of post-quantum steering \cite{hoban2018channel,sainz2018formalism,sainz2020bipartite}, understanding it as a resource \cite{cavalcanti2022post,Zjawin2023quantifyingepr,zjawin2022resource}, and unravelling the properties of the most general kind of resources that do not lead to signalling \cite{banacki2021edge,ramanathan2022single}. 
For instance, Ref.~\cite{zjawin2022resource} provides a unified framework for studying the nonclassicality of various generalizations of the EPR scenario. It  studies conversions between post-quantum resources (both analytically and numerically)  when the free operations are Local Operations and Shared Randomness. 
Moreover, Ref.~\cite{cavalcanti2022post} shows that post-quantum steering provides an advantage over quantum resources when the task of Remote State Preparation is considered: here, Alice is asked to deterministically prepare a system on state $\ket{\psi}$ in Bob's lab, having only a classical description of this state and trying to minimise the amount of classical communication she should send Bob. Post-quantum steering enables this communication to be only one bit when the system is a qubit, whereas quantum theory requires 2 bits. The search for a task with post-quantum steering advantage took place in Ref.~\cite{cavalcanti2022post} beyond what is considered device-independent. The reason for this was that post-quantum correlations don't necessarily capture post-quantum steering, as mentioned in the previous paragraph. 
 It would appear that the device-independent approach is thus limited for studying the consequences of post-quantum steering. 

In this work we show that this is not the full picture. We show that any resource that demonstrates post-quantum steering can be combined within a larger network of quantum resources to exhibit post-quantum correlations. In particular, multiple experiments that individually cannot exhibit post-quantum correlations can be combined in a network that does. We call this ``activation of post-quantum steering", analogous to the (super)activation of bound entanglement \cite{horodecki1999bound}.  Recall that in the latter, two different bound entangled states (those from which we cannot individually distill maximally entangled states) can be combined to become distillable.  The analogous resource  to a bound entangled state in our setting is post-quantum steering that does not demonstrate post-quantum correlations. 

To be more precise, we show how to activate tripartite post-quantum steering resources. To prove that it can be activated we require two components: 1) to place the tripartite resources in a setting with four parties, and 2) the ability to certify that certain quantum resources are prepared between two parties among the four. The  latter  is referred to as \textit{self-testing of assemblages}, as studied previously in  Ref.~\cite{chen2021robust}.  Here, however,  we generalise the setting to allow for the two parties to share post-quantum resources within the larger network of four or more parties. The bipartite resource that is self-tested is that which results from Pauli measurements made on a maximally entangled state;  such a resource can be used to perform ``remote quantum state tomography" \cite{bowles2018device}. That is, in an indirect way, we can certify (up to certain symmetries) that tomographically-complete measurements are being made on a quantum system, even without assuming anything about the quantum system; this is done in a device-independent manner, based on correlations. This then allows us to witness post-quantum steering in a device-independent way. 

\section{Einstein-Podolsky-Rosen scenarios}

There has been much literature on the study of EPR-steering, with numerous in-depth reviews, e.g.~Refs.~\cite{reid2009colloquium,uola2020quantum,cavalcanti2016quantum}. The aforementioned traditional EPR scenario is the one with two parties, Alice and Bob, where Alice's system is treated as a black box that takes in inputs and produces outputs, and Bob's system is assumed to be quantum.  Bob's system is thus associated with a Hilbert space of a known dimension, upon which he can make known measurements. Alice and Bob's systems can have shared resources such as randomness or entanglement. The setting relevant to this work primarily has three parties, Alice, Bob and Charlie. Now Alice and Bob have uncharacterised devices that are treated as black boxes; Charlie's system is now quantum, associated with a known Hilbert space, and upon which Charlie can make known measurements. The three parties can share some arbitrary resource, as long as they cannot use it to communicate to each other.

The important consequence of the above is that Charlie can perform full state tomography upon his quantum system, while Alice and Bob input classical data to their black boxes and receive  classical output data. See Fig.~\ref{fig:tradtrip} for a schematic of the setting. As can be seen, Alice's (Bob's) inputs $x\in\mathbb{X}$ ($y\in\mathbb{Y}$) and outputs $a\in\mathbb{A}$ ($b\in\mathbb{B}$) are labelled by elements of finite sets. For each choice of inputs $x$ and $y$, the outputs may be produced probabilistically, thus  there is the (conditional) probability distribution $p(a \in \mathbb{A},b \in \mathbb{B}|x \in \mathbb{X},y \in \mathbb{Y})$, which we denote $p(ab|xy)$, for brevity's sake. Charlie's quantum system is associated with a Hilbert space $\mathcal{H}_{C}$ of dimension $D$ and they can perform tomography to characterise the density operator $\rho_C$ for their system. Furthermore, we can consider the sub-normalised density operators for Charlie's system, conditioned upon the classical outputs of the other two parties. That is, if for inputs $x$  and $y$, we obtain outputs $a$ and $b$, there is a positive semi-definite operator $\sigma_{ab|xy}$ acting on $\mathcal{H}_{C}$, such that $\textrm{tr}\{\sigma_{ab|xy}\}=p(ab|xy)$ and $\sum_{a,b}\sigma_{ab|xy}=\rho_C$. 
When taken together, the set of these sub-normalised density operators is called an \textit{assemblage} \cite{pusey2013negativity}, denoted by $\mathbb{\Sigma}:=\{\sigma_{ab|xy}\}_{a,b,x,y}$. Since Alice, Bob, and Charlie are spatially separated, we assume there is no communication between them and thus their shared resources, whatever they may be, cannot be used to communicate. This immediately imposes the constraint that the distribution $p(ab|xy)$ is non-signalling, i.e. $\sum_a p(ab|xy)=p(b|xy)=p(b|y)$ and $\sum_b p(ab|xy)=p(a|xy)=p(a|x)$. These constraints will be reflected in the assemblage, e.g. $\sum_a \sigma_{ab|xy}:=\sigma_{b|y}$ such that $\textrm{tr}\{\sigma_{b|y}\}=p(b|y)$, and the operator $\sigma_{a|x}$ is defined likewise.

A quantum realisation of an assemblage $\mathbb{\Sigma}$ is given by a tripartite quantum state $\ket{\psi}_{ABC}\in\mathcal{H}_{A}\otimes\mathcal{H}_{B}\otimes\mathcal{H}_{C}$ where $\mathcal{H}_{A}$ and $\mathcal{H}_{B}$ are Hilbert spaces for Alice and Bob's systems respectively and  general measurements $\{M_{a|x}\}_{a,x}$ and $\{M_{b|y}\}_{b,y}$ acting on Alice's and Bob's systems respectively. The assemblage elements are:
\begin{align*}
\sigma_{ab|xy}=\textrm{tr}_{AB}\left\{(M_{a|x}\otimes M_{b|y}\otimes \mathbb{I}_{C})\ket{\psi}\bra{\psi}_{ABC}\right\}\,.
\end{align*}
Thus a \textit{post-quantum assemblage} is an assemblage for which there does not exist a state and local measurements such that one recovers the assemblage elements as above.  

\begin{figure}[t!]
    \centering
    \begin{tikzpicture}[scale=0.8]
		\node at (-2,1.2) {Alice};
		\shade[draw, thick, ,rounded corners, inner color=white,outer color=gray!50!white] (-1.7,-0.3) rectangle (-2.3,0.3) ;
		\draw[thick, ->] (-2.5,0.5) to [out=180, in=90] (-2,0.3);
		\node at (-2.9,0.5) {$x$};
		\draw[thick, -<] (-2,-0.3) to [out=-90, in=180] (-2.5,-0.5);
		\node at (-3.1,-0.5) {$a$};
		
		\node at (6,1.2) {Charlie};
		\node[draw,shape=circle,fill,scale=.4] at (6,0) {};
		\node at (6,-1) {$\sigma_{ab|xy}$};
		
 		\node at (2,1.2) {Bob};
 		\shade[draw, thick, ,rounded corners, inner color=white,outer color=gray!50!white] (2.3,-0.3) rectangle (1.7,0.3) ;
 		\draw[thick, ->] (1.5,0.5) to [out=180, in=90] (2,0.3);
 		\node at (1.1,0.5) {$y$};
 		\draw[thick, ->] (2,-0.3) to [out=-90, in=0] (1.5,-0.5);
 		\node at (0.9,-0.5) {$b$};		
	    \end{tikzpicture}
    \caption{Depiction of a tripartite EPR steering scenario, involving two sets of uncharacterised devices (Alice and Bob), and one set of characterised devices (Charlie). The measurements made by Alice and Bob `steer' the state of Charlie, which is fully captured by the set of sub-normalised states $\sigma_{ab|xy}$ (referred to as an `assemblage').}
    \label{fig:tradtrip}
\end{figure}
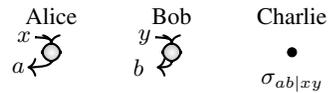

A natural question is how do we show an assemblage is post-quantum. One way to show this is by violating a steering inequality beyond what is possible with quantum assemblages. That is, by construction, we can find a set of Hermitian operators $\mathcal{F}_{abxy}$ (acting on Charlie's Hilbert space) such that the following inequality is satisfied for any quantum assemblage $\mathbb{\Sigma}^{Q}$ with elements $\sigma^{\mathsf{Q}}_{ab|xy}$:
\begin{align}\label{ineq}
\sum_{a\in\mathbb{A}\,,\,b\in\mathbb{B}\,,\,x\in\mathbb{X}\,,\,y\in\mathbb{Y}} \tr[\mathcal{F}_{abxy} \, \sigma^{\mathsf{Q}}_{ab|xy}] \geq 0 \,.
\end{align}
Thus, if an assemblage $\mathbb{\Sigma}$, when put into the left-hand-side of  Eq.~\eqref{ineq},  results in a negative (real) number, then that assemblage is post-quantum. This technique has been used in previous work to demonstrate post-quantum steering \cite{sainz2015postquantum,sainz2018formalism}.

Note that given a set of general measurements $\{M_{c|z}\}_{c,z}$ on Charlie's system, with $z\in\mathbb{Z}$ and $c\in\mathbb{C}$ being the choice of measurement and label of the output respectively, we can obtain the \textit{correlations} $p(abc|xyz)=\textrm{tr}\left(M_{c|z}\sigma_{ab|xy}\right)$, as shown in Fig.~\ref{fig:nopqc}. By construction, these correlations satisfy the tripartite non-signalling conditions, i.e. for all bipartitions of the three parties the non-signalling conditions hold. This recipe for generating correlations allows us to relate the EPR scenario to a Bell scenario, which is fully device-independent. In a Bell test, correlations $p(abc|xyz)$ are said to be post-quantum if there do not exist Hilbert spaces $\mathcal{K}_{A}$, $\mathcal{K}_{B}$ and $\mathcal{K}_{C}$ for the systems of Alice, Bob and Charlie respectively, a quantum state $\ket{\phi}\in\mathcal{K}_{A}\otimes\mathcal{K}_{B}\otimes\mathcal{K}_{C}$, and local measurements $\{N_{a|x}\}_{a,x}$, $\{N_{b|y}\}_{b,y}$ and $\{N_{c|z}\}_{c,z}$ on the systems of Alice, Bob and Charlie respectively such that $p(abc|xyz)=\bra{\phi}N_{a|x}\otimes N_{b|y}\otimes N_{c|z}\ket{\phi}$. Equivalently, post-quantum correlations can be demonstrated by violating a Tsirelson-type bound for a Bell inequality \cite{tsirel1987quantum,PhysRevA.73.022110}.  It should now be clear that if correlations obtained from an assemblage are  post-quantum, then the original assemblage must have been post-quantum. What  Ref.~\cite{sainz2015postquantum} showed was that there exist post-quantum assemblages that do not result in  such post-quantum correlations.

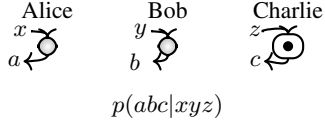
\begin{figure}
    \centering
    \begin{tikzpicture}[scale=0.8]
		\node at (-2,1.2) {Alice};
		\shade[draw, thick, ,rounded corners, inner color=white,outer color=gray!50!white] (-1.7,-0.3) rectangle (-2.3,0.3) ;
		\draw[thick, ->] (-2.5,0.5) to [out=180, in=90] (-2,0.3);
		\node at (-2.9,0.5) {$x$};
		\draw[thick, -<] (-2,-0.3) to [out=-90, in=180] (-2.5,-0.5);
		\node at (-3.1,-0.5) {$a$};
		
		\node at (6,1.2) {Charlie};
		\node[draw,shape=circle,fill,scale=.4] at (6,0) {};
 		\draw[thick, ,rounded corners] (6.5,-0.4) rectangle (5.5,0.4) ;
   \draw[thick, ->] (5.2,0.5) to [out=180, in=90] (6,0.4);
 		\node at (4.9,0.5) {$z$};
 		\draw[thick, ->] (6,-0.4) to [out=-90, in=0] (5.2,-0.5);
 		\node at (4.9,-0.5) {$c$};		
		
 		\node at (2,1.2) {Bob};
 		\shade[draw, thick, ,rounded corners, inner color=white,outer color=gray!50!white] (2.3,-0.3) rectangle (1.7,0.3) ;
 		\draw[thick, ->] (1.5,0.5) to [out=180, in=90] (2,0.3);
 		\node at (1.1,0.5) {$y$};
 		\draw[thick, ->] (2,-0.3) to [out=-90, in=0] (1.5,-0.5);
 		\node at (0.9,-0.5) {$b$};		
            \node at (2,-2) {$p(abc|xyz)$};
    \end{tikzpicture}
    \caption{If we assume that Charlie performs measurements labelled by $z$, with outcomes labelled by $c$, then an assemblage $\sigma_{ab|xy}$ leads to correlations $p(abc|xyz)$.} 
    \label{fig:nopqc}
\end{figure}

However, and as pointed out in  Ref.~\cite{schmid2021postquantum}, there may be other ways of mapping a post-quantum assemblage to correlations than just by measuring Charlie's system. In this work we describe another method of transforming a post-quantum assemblage into correlations. The method consists of considering a tripartite assemblage distributed between a network of four, non-communicating parties.  The fourth party is called Dani and is treated as a black box, with input $w\in\mathbb{W}$ and output $d\in\mathbb{D}$.  As before, Alice, Bob, and Charlie share the system described by the original tripartite assemblage, but Charlie and Dani also share a bipartite resource. Charlie's Hilbert space is now $\mathcal{H}_{C}\otimes\mathcal{H}_{C'}$, where $\mathcal{H}_{C}$ ($\mathcal{H}_{C'}$) is the space for the resource shared with Alice and Bob (Dani). Therefore, if $\mathbb{\Sigma}_{AB}:=\{\sigma_{ab|xy}\}_{a,b,x,y}$ denotes the assemblage describing Alice, Bob and Charlie's shared resources, and the corresponding assemblage $\mathbb{\Sigma}_{D}:=\{\sigma_{d|w}\}_{d,w}$ for Charlie and Dani, then Charlie's complete system is described by the assemblage $\mathbb{\Sigma}_{AB}\otimes\mathbb{\Sigma}_{D}:=\{\sigma_{ab|xy}\otimes\sigma_{d|w}\}_{a,b,c,d,w,x,y,z}$. 
The four parties can now map the assemblage $\mathbb{\Sigma}_{AB}\otimes\mathbb{\Sigma}_{D}$  into correlations $p(abcd|xyzw)$ by measuring Charlie's system, as in Fig.~\ref{fig:bowlset}. We can now state the main conclusion of our work:

\bigskip

\noindent
\textbf{Main result:} \textit{A tripartite assemblage $\mathbb{\Sigma}_{AB}$ is post-quantum if and only if it is possible to generate correlations $p(abcd|xyzw)$ that are post-quantum for the four-party network scenario depicted in Fig.~\ref{fig:bowlset}.}

\bigskip

What this says is that it is always possible to `activate' post-quantum steering, so that it leads to post-quantum correlations, by placing it in a simple network.

The first thing to notice is that the only post-quantum resource in the assemblage $\mathbb{\Sigma}_{AB}\otimes\mathbb{\Sigma}_{D}$ will be from $\mathbb{\Sigma}_{AB}$.  Indeed, due to the GHJW theorem, $\mathbb{\Sigma}_{D}$ will always have a quantum realisation since it is a bipartite assemblage. Secondly, it is clear that if the correlations $p(abcd|xyzw)$ are post-quantum then the original assemblage $\mathbb{\Sigma}_{AB}$ was post-quantum.  Hence we need to prove the implication in the other direction.  In what remains of the main text and appendix, we show how to prove this when the Hilbert space $\mathcal{H}_{C}$ is dimension $2$, i.e. a qubit. The general case  for arbitrary $d$  is entirely proven  in the appendix (Appendix \ref{ap:highdim}).  

In general, our proof strategy starts with us assuming an arbitrary assemblage $\mathbb{\Sigma}:=\{\sigma_{abd|xyw}\}_{a,b,d,x,y,w}$ for Charlie in the four-party setting, and is based on the correlations $p(abcd|xyzw)$. We can certify, or self-test, that the assemblage $\mathbb{\Sigma}$ is  essentially  of the form $\mathbb{\Sigma}_{AB}\otimes\mathbb{\Sigma}_{D}$, with $\mathbb{\Sigma}_{D}$ known. We can then leverage this to perform particular measurements on Charlie's combined system that result in post-quantum correlations. To explain in more detail, we first need to introduce more formally self-testing of assemblages.

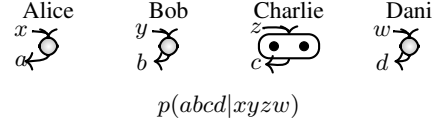
\begin{figure}
    \centering
    \begin{tikzpicture}[scale=0.8]
		\node at (-2,1.2) {Alice};
		\shade[draw, thick, ,rounded corners, inner color=white,outer color=gray!50!white] (-1.7,-0.3) rectangle (-2.3,0.3) ;
		\draw[thick, ->] (-2.5,0.5) to [out=180, in=90] (-2,0.3);
		\node at (-2.9,0.5) {$x$};
		\draw[thick, -<] (-2,-0.3) to [out=-90, in=180] (-2.5,-0.5);
		\node at (-2.9,-0.5) {$a$};
		
		\node at (6,1.2) {Charlie};
		\node[draw,shape=circle,fill,scale=.4] at (5.5,0) {};
            \node[draw,shape=circle,fill,scale=.4] at (6.5,0) {};
 		\draw[thick, ,rounded corners] (7,-0.4) rectangle (5,0.4) ;
   \draw[thick, ->] (5.2,0.6) to [out=180, in=90] (6,0.4);
 		\node at (4.9,0.6) {$z$};
 		\draw[thick, ->] (6,-0.4) to [out=-90, in=0] (5.2,-0.6);
 		\node at (4.9,-0.6) {$c$};

 		\node at (2,1.2) {Bob};
 		\shade[draw, thick, ,rounded corners, inner color=white,outer color=gray!50!white] (2.3,-0.3) rectangle (1.7,0.3) ;
 		\draw[thick, ->] (1.5,0.5) to [out=180, in=90] (2,0.3);
 		\node at (1.1,0.5) {$y$};
 		\draw[thick, ->] (2,-0.3) to [out=-90, in=0] (1.5,-0.5);
 		\node at (1.1,-0.5) {$b$};

		\node at (10,1.2) {Dani};
 		\shade[draw, thick, ,rounded corners, inner color=white,outer color=gray!50!white] (10.3,-0.3) rectangle (9.7,0.3) ;
 		\draw[thick, ->] (9.5,0.5) to [out=180, in=90] (10,0.3);
 		\node at (9.1,0.5) {$w$};
 		\draw[thick, ->] (10,-0.3) to [out=-90, in=0] (9.5,-0.5);
 		\node at (9.1,-0.5) {$d$};

\node at (4,-2) {$p(abcd|xyzw)$};

	    \end{tikzpicture}
    \caption{The novel network scenario considered in this work. We consider a second assemblage shared between Charlie and Dani, and allow Charlie to perform joint measurements on both of his systems, producing correlations $p(abcd|xyzw)$. Our main result is to show that every post-quantum assemblage will lead to post-quantum correlations in this network configuration.}
    \label{fig:bowlset}
\end{figure}

\section{Self-testing assemblages}

The self-testing of assemblages was introduced in Ref.~\cite{chen2021robust}. Here, we just need to consider the bipartite setting of Charlie and Dani. In traditional self-testing of assemblages we assume the two parties share a quantum system and make local measurements on it to generate correlations $p(cd|zw)$. Since we assume that the systems are quantum, we can consider the ``underlying" assemblage $\mathbb{\Omega}_{D}$ generated for Charlie's system, which has elements  $\sigma_{d|w}=\textrm{tr}_{D}\left\{(\mathbb{I}_{C}\otimes M_{d|w})\rho_{CD}\right\}$.  There, $\rho_{CD}$ is the state shared by Charlie and Dani, and $\{M_{d|w}\}_{d,w}$ are the measurements made by Dani, where $\mathcal{H}_D$ is the Hilbert space of Dani's system.  Note that we do not specify the dimensions of the Hilbert spaces upon which $\rho_{CD}$ acts, and thus $\sigma_{d|w}$ is an operator on some unknown Hilbert space. We want to be able to relate the underlying assemblage $\mathbb{\Omega}_{D}$ to a \textit{reference} assemblage $\mathbb{\Sigma}_{D}$ by some appropriate transformation. That is, we say correlations $p(cd|zw)$ self-test a reference assemblage if there exists a channel $\Lambda$ and density operator $\xi$ such that for all elements $\sigma_{d|w}$ and $\sigma^{\mathsf{R}}_{d|w}$  (in $\mathbb{\Omega}_{D}$ and $\mathbb{\Sigma}_{D}$ respectively)  we have $\xi\otimes\sigma^{\mathsf{R}}_{d|w}=\Lambda(\sigma_{d|w})$. Thus, the assemblage shared by Charlie and Dani is essentially the reference assemblage along with the state of some arbitrary auxiliary system.

We need to self-test an assemblage with elements $\sigma_{d|w}^{\mathsf{R}}$ where $w\in\{1,2,3\}=:\mathbb{W}$, and $d\in\{0,1\}=:\mathbb{D}$ given by
\begin{align}\label{eq:theSTdass}
\sigma^{\mathsf{R}}_{d|w} = r \, \widetilde{\sigma}^{\mathsf{R}}_{d|w}\otimes |0\rangle\langle 0| + (1-r) \, (\widetilde{\sigma}^{\mathsf{R}}_{d|w})^\mathsf{T}\otimes |1\rangle\langle 1|\,,
\end{align}
where $0 \leq r \leq 1$ is an unknown real parameter, $(\cdot)^\mathsf{T}$ denotes the transpose, and  
\begin{align}\label{eq:theassembD}
\widetilde{\sigma}^{\mathsf{R}}_{d|w} = \frac{\mathbb{I} + (-1)^d \, P_w}{2},
\end{align}
where $P_{w}\in\{X,Y,Z\}$ is a Pauli operator depending on $w$.

For the correlations $p(cd|zw)$ that self-test the assemblage $\sigma_{d|w}^{\mathsf{R}}$, we have $c\in\{0,1\}=:\mathbb{C}$ and $z\in\{1,2,3,4,5,6\}$.  However, as is common in self-testing, instead of computing the correlations directly, we can consider a particular Bell inequality violation. The relevant inequality is
\begin{multline}\label{bellineq}
    \mathcal{I}^{\mathsf{CD}} := E_{1,1} + E_{1,2} - E_{2,1} + E_{2,2} + E_{3,1} + E_{4,1} \\
     + E_{3,3} - E_{4,3} + E_{5,2} + E_{6,2} + E_{5,3} - E_{6,3}\leq 6,
\end{multline}
where $E_{z,w} = \sum_{c,d} (-1)^{c+d} \, p(cd|zw)$. It has been shown in Refs.~\cite{bowles2018self,Chen2021robustselftestingof} that for the maximal violation $\mathcal{I}^{\mathsf{CD}}=6\sqrt{2}$, the assemblage $\sigma_{d|w}^{\mathsf{R}}$ above is self-tested. At this point, for simplicity, we assume $r=0$, then each reference assemblage element is $(\sigma_{d|w}^{\mathsf{R}})^{\mathsf{T}}=\tilde{\sigma}_{d|w}^{\mathsf{R}}\otimes |1\rangle\langle 1|$. This also allows us to also omit the ``flag" state $|1\rangle\langle 1|$ from the following discussion. 

This self-testing result applies in the bipartite setting, but our scenario is that of Fig.~\ref{fig:bowlset}, where the resources between the four parties could be post-quantum. As explained fully in   Appendix \ref{ap:apA},  this does not alter the self-testing statement. That is, given correlations $p(abcd|xyzw)$, if they are substituted into the Bell expression $\mathcal{I}^{\mathsf{CD}}$ (by taking the appropriate marginal distributions) and give the maximal violation, then we still self-test the above assemblage $\sigma_{d|w}^{\mathsf{R}}$ on Charlie's system. This is done by leveraging the fact that Charlie's system must be quantum mechanical, and thus the GHJW theorem tells us that all assemblages based on bipartite correlations (between Charlie and Dani) have a quantum realisation. 

A central consequence of self-testing is that if Dani makes a measurement $M_{d|w}$, then Charlie's (subnormalised) post-measurement state can be assumed to be $\xi\otimes(\widetilde{\sigma}_{d|w}^{\mathsf{R}})^{\mathsf{T}}$. For simplicity, assume that $\xi$ is a qubit. Now, suppose that Charlie makes a two-outcome measurement $\{M_{0},M_1=\mathbb{I}-M_{0}\}$ where $M_0=\ket{\Psi^{+}}\bra{\Psi^{+}}$ for $\ket{\Psi^{+}}=\frac{1}{\sqrt{2}}\left(\ket{00}+\ket{11}\right)$. For outcome $0$, we see that 
\begin{equation}\label{useful}
    \textrm{tr}\left(M_{0}(\xi\otimes\sigma_{d|w}^{\mathsf{R}})\right)=\frac{1}{2}\textrm{tr}\left((\sigma_{d|w}^{\mathsf{R}})^{\mathsf{T}}\xi\right).
\end{equation}
Thus, up to a positive real scalar, the entangling measurement allows us to compute the product between  the  known operator $(\sigma_{d|w}^{\mathsf{R}})^{\mathsf{T}}$ and an unknown one.  However, after Alice and Bob generate their outputs $a$ and $b$ (given $x$ and $y$),  Charlie's system is described by assemblage elements $\sigma_{abd|xyw}$, and Eq.~\eqref{useful} does not immediately apply. Fortunately, due to self-testing, in  Appendix \ref{ap:indasump}  we show that all elements  $\sigma_{abd|xyw}$ are of the form $\sigma_{ab|xy}\otimes\sigma_{d|w}^{\mathsf{R}}$. Thus the same approach as in Eq.~\eqref{useful} applies (when $r=0$ as mentioned).

 For each value of $w$,  $\tilde{\sigma}_{0|w}^{\mathsf{R}}-\tilde{\sigma}_{1|w}^{\mathsf{R}}= P_{w}$.  Along with the identity operator, the Pauli operators form an orthonormal basis in the space of two-dimensional Hermitian operators. Additionally, note that for all $w$,  $\tilde{\sigma}_{0|w}^{\mathsf{R}}+\tilde{\sigma}_{1|w}^{\mathsf{R}}=\mathbb{I}$.   Thus every two-dimensional Hermitian operator $F$ can be written as a real linear combination $F=\sum_{d,w}f_{dw}\tilde{\sigma}_{d|w}^{\mathsf{R}}$ for appropriately chosen real coefficients $f_{dw}$. This observation will be central to demonstrating post-quantum correlations, since it allows us to effectively transform steering inequalities into Bell inequalities. 

\section{Tsirelson-Bell inequalities from steering inequalities}

Using the above method, in the steering inequality in Eq.~\eqref{ineq}, the Hermitian operators $\mathcal{F}_{abxy}$ can be replaced by the elements of the reference assemblage, i.e. 
\begin{align}\label{eq:lala}
 \sum_{a,b,d,x,y,w} f_{abdxyw}\tr[\tilde{\sigma}_{d|w}^{\mathsf{R}} \, \sigma^{\mathsf{Q}}_{ab|xy}] \geq 0 \,,
\end{align}
for appropriately chosen coefficients $f_{abdxyw}$. Now, in Sec.~\ref{sse:tsiineq} and Appendix \ref{ap:apB}  we show that if we let Charlie perform  any  joint measurement $\{M_{0},M_1=\mathbb{I}-M_{0}\}$  on their systems, denoted by the setting  $z = \star$, then
\begin{align}\label{pseudobell}
 \sum_{a,b,d,x,y,w} f_{abdxyw}p(ab0d|xy\star w) \geq 0 
\end{align}
(with the same coefficients $f_{abdxyw}$) is bona fide a Tsirelson inequality: any violation of it indicates that the correlations are post-quantum.  Additionally, all of the simplifying assumptions made above can be generalised into a single assumption that the systems shared between Alice, Bob and Charlie are independent of those shared between Charlie and Dani. In Appendix \ref{ap:indasump} we show how the independence assumption follows from self-testing in the special case where $r\in\{0,1\}$ in Eq.~\eqref{eq:theSTdass}. We leave it for future work how to eliminate this assumption.

\subsection{Towards a proof of Eq.~\eqref{pseudobell}}\label{sse:tsiineq}

Recall that Eq.~\eqref{eq:lala} holds for all quantum assemblages with elements $\sigma^{\mathsf{Q}}_{ab|xy}$. Using Eq.~\eqref{useful} and the linearity of the trace, we can rewrite this inequality as
\begin{align}\label{inter1}
 \sum_{a,b,d,x,y,w} 2f_{abdxyw}\tr\{M_{0}((\tilde{\sigma}_{d|w}^{\mathsf{R}})^{\mathsf{T}} \otimes\sigma^{\mathsf{Q}}_{ab|xy})\} \geq 0 \,.
\end{align}
Note that we can remove the factor of $2$ (and the inequality still holds) and recall that a global transpose does not affect the trace. Therefore
\begin{align} \nonumber
 \sum_{a,b,d,x,y,w} f_{abdxyw}\tr\{M_{0}^{\mathsf{T}}(\tilde{\sigma}_{d|w}^{\mathsf{R}} \otimes(\sigma^{\mathsf{Q}}_{ab|xy})^{\mathsf{T}})\}
= \\ \sum_{a,b,d,x,y,w} f_{abdxyw}\tr\{M_{0}(\tilde{\sigma}_{d|w}^{\mathsf{R}} \otimes\tilde{\sigma}^{\mathsf{Q}}_{ab|xy})\} \geq 0, \label{inter2}
\end{align}
where in the second line we used that fact that the maximally entangled state is invariant under transpose and that, as shown in Ref.~\cite{sainz2018formalism}, the transpose of a quantum assemblage gives a new quantum assemblage (denoted by the tilde). Since the inequality holds for all quantum assemblages, we still get positivity. By this same argument, we now see that the positivity in Eq.~\eqref{inter2} is satisfied even if we replace the operator $\tilde{\sigma}_{d|w}^{\mathsf{R}}$ in  Eq.~\eqref{inter2}  with $\sigma^{\mathsf{R}}_{d|w} = r \, \widetilde{\sigma}^{\mathsf{R}}_{d|w}\otimes|0\rangle\langle0| + (1-r) \, (\widetilde{\sigma}^{\mathsf{R}}_{d|w})^\mathsf{T}\otimes|1\rangle\langle 1|$ for any value of $0\leq r \leq 1$.  Since we have an additional qubit storing the ``flag'' states $|0\rangle\langle 0|$ and $|1\rangle\langle 1|$, the measurement operator $M_{0}$ is extended to $M_{0}\otimes \mathbb{I}$, with the identity operator acting on the flag system.

If a measurement $\{M_{0},M_1=\mathbb{I}-M_{0}\}$ is made on Charlie's system, we will denote this by the measurement choice $\star$, and the outcomes with $c\in\{0,1\}$. In our scenario Charlie's final set of measurement choices is then $\mathbb{Z}:=\{1,2,3,4,5,6,\star\}$, with the values of $\mathbb{C}:=\{0,1\}$. Thus we have $\tr\{M_{0}(\sigma_{d|w}^{\mathsf{R}} \otimes\sigma^{\mathsf{Q}}_{ab|xy})\}=p(ab0d|xy\star w)$. By substituting this probability into  Eq.~\eqref{inter2}  we arrive at Eq.~\eqref{pseudobell}
This resembles a Bell inequality, since it is a linear combination of probabilities. A seeming issue is that it might not hold for all quantum correlations $p(ab0d|xy\star w)$. That is, in deriving this inequality we have made the following assumptions: (i) part of Charlie's system is a qubit described by assemblage elements $\sigma_{d|w}^{\mathsf{R}}$; (ii) Charlie has an additional qubit which is described by assemblage elements $\sigma^{\mathsf{Q}}_{ab|xy}$; (iii) each of Charlie's assemblage elements $\sigma_{abd|xyw}$ is equal to $\sigma_{ab|xy}\otimes\sigma_{d|w}^{\mathsf{R}}$, and (iv) Charlie makes the measurement associated with the choice $\star$. The first assumption can be \textit{concluded} by self-testing the assemblage shared by Charlie and Dani alone, and is therefore in fact not an assumption. In Appendix \ref{ap:apB}, we show that we can dispense with the second and fourth assumptions. That is, there we show that assuming the self-testing was successful and assumption (iii) above, the inequality in  Eq.~\eqref{pseudobell} is a Tsirelson inequality: any violation of it indicates that the correlations are post-quantum.

\section{Summary and discussions} 

To summarise all of the above, we have shown that if a tripartite post-quantum assemblage $\mathbb{\Sigma}$ is shared between Alice, Bob, and Charlie, we can distribute this in a network of four parties (including Dani). Then we can consider the following Bell scenario: Charlie's set of measurement choices in $\mathbb{Z}=\{1,2,3,4,5,6,\star\}$ and Dani's set is $\mathbb{W}:=\{1,2,3\}$ and their outcomes are $\mathbb{C}=\mathbb{D}=\{0,1\}$, with Alice and Bob's measurements and outcomes determined by the assemblage. If Charlie and Dani's correlations maximally violate the inequality in  Eq.~\eqref{bellineq} by sharing the quantum assemblage in  Eq.~\eqref{eq:theSTdass}, then there exists an inequality of the form of  Eq.~\eqref{pseudobell} which they violate and thus demonstrate post-quantum correlations. The above is true even if the tripartite post-quantum assemblage does not demonstrate post-quantum correlations in the setting in Fig.~\ref{fig:nopqc} (i.e.~in the setting considered in the original work \cite{sainz2015postquantum}). It is thus always possible, using this procedure, to `activate' post-quantum steering, so that we see the post-quantumness already at the level of correlations alone. 

For higher (but finite) dimensional systems, we can generalise the construction above to make the result universal  (see Appendix \ref{ap:highdim}).   This activation technique  can also be straightforwardly generalised to the case of more than three parties in the initial steering scenario.  For other generalised forms of EPR steering, activation of the relevant kind of generalised post-quantum steering have also been found \cite{zjawin2024activation}.
It would be interesting to derive both our argument and the one in Ref.~\cite{zjawin2024activation} into a single expression or inequality that witnesses post-quantum non-locality from a post-quantum assemblage; we leave this to future work.

There is a resource-theoretic aspect to our work. In  Ref.~\cite{schmid2021postquantum}, the resource theory of local operations and shared entanglement (LOSE) was explored as a means to understand post-quantumness. The free resources in this resource theory are arbitrary quantum states and perform local operations, but multiple parties cannot communicate.  In some sense, our work can be seen as a resource conversion task where a post-quantum resource is combined with a free resource (the assemblage in  Eq.~\eqref{eq:theSTdass})  to produce a new resource that exhibits post-quantum correlations. This draws a nice parallel with other resource conversion tasks in resource theories such as local operations and classical communication (LOCC).

One interesting aspect of post-quantum steering is that it allows for another point of view on the question of how to single out quantum theory from the landscape of possible physical theories. More specifically, what principles are sufficient (and meaningful) to single out the set of quantum assemblages from the set of all possible assemblages. Our activation result provides a new way to tackle this question, by connecting it to that of quantum correlations and research in device-independent principles to recover them \cite{van2013implausible,brassard2006limit,navascues2010glance,pawlowski2009information,fritz2013local}. The quest to find such a complete set of postulates for correlations has barely progressed in the recent years, and it has even been conjectured that device-dependent postulates will be needed to succeed at the task. The EPR scenario is a bridge between device-independent and device-dependent settings, and offers new ground for  exploration \cite{rossi2022characterising}.  Going further, one could ask in which general physical theories can one have activation of steering in the same manner as our work. Could activation itself result in useful postulates to single out quantum theory? We hope our work adds to the toolbox required to understand what is special about quantum theory.

\bigskip

\section*{Acknowledgments} 

MJH and ABS thank John H. Selby for fruitful discussions.  MJH acknowledges the FQXi large grant ``The Emergence of Agents from Causal Order'' (FQXi FFF Grant number FQXi-RFP-1803B). PS is a CIFAR Azrieli Global Scholar in the Quantum Information Science Program, and also gratefully acknowledges support from a Royal Society University Research Fellowship (UHQT/NFQI). ABS acknowledges support by the Foundation for Polish Science (IRAP project, ICTQT, contract no. 2018/MAB/5, co-financed by EU within Smart Growth Operational Programme), and by the IRA Programme, project no. FENG.02.01-IP.05-0006/23, financed by the FENG program 2021-2027, Priority FENG.02, Measure FENG.02.01., with the support of the FNP.
\newline

\bibliographystyle{quantum}
\bibliography{bibliography}

 
\appendix

\section{Self-testing in post-quantum scenarios}\label{ap:apA}

In this section we explain how one can self-test a quantum assemblage between Charlie and Dani when the underlying scenario that involves the four parties may allow for post-quantum systems to be shared. 

Let us assume by contradiction that self-testing cannot be performed between Charlie and Dani when the underlying physical theory cannot be guaranteed to be quantum. That is, let Charlie and Dani observe correlations $p(cd|zw)$ that are compatible with the assemblage $\{\sigma_{d|w}\}$ being the quantum assemblage that arises from performing the Pauli measurements on the maximally entangled state (denoted $\{\widetilde{\sigma}^{\mathsf{R}}_{d|w}\}$ in the main text). Our assumption that `self-testing is not possible' then implies that there is another assemblage $\{\sigma^*_{d|w}\}$ that is different from $\{\widetilde{\sigma}^{\mathsf{R}}_{d|w}\}$ and that yields the same correlations $p(cd|zw)$. Now, the GHJW theorem implies that there is a quantum realisation of $\{\sigma^*_{d|w}\}$, which in turns provides a quantum assemblage different (and not equivalent to) $\{\widetilde{\sigma}^{\mathsf{R}}_{d|w}\}$ which yields the same correlations $p(cd|zw)$. This then contradicts the fact that the assemblage $\{\widetilde{\sigma}^{\mathsf{R}}_{d|w}\}$ can be self-tested within quantum theory. It follows then that such a different $\{\sigma^*_{d|w}\}$ cannot exist, and hence self-testing can be performed between Charlie and Dani when the underlying physical theory cannot be guaranteed to be quantum.

\section{Tsirelson-Bell inequalities from steering inequalities -- getting rid of assumptions (ii) and (iv)}\label{ap:apB}

In Sec.~\ref{sse:tsiineq} we begun the proof of the claim that Eq.~\eqref{pseudobell} holds for any correlation that comes from Charlie sharing a quantum assemblage with Alice and Bob. There we made four assumptions: (i) part of Charlie's system is a qubit described by assemblage elements $\sigma_{d|w}^{\mathsf{R}}$; (ii) Charlie has an additional qubit which is described by assemblage elements $\sigma^{\mathsf{Q}}_{ab|xy}$; (iii) each of Charlie's assemblage elements $\sigma_{abd|xyw}$ is equal to $\sigma_{ab|xy}\otimes\sigma_{d|w}^{\mathsf{R}}$, and (iv) Charlie makes the measurement associated with the choice $\star$. We argued that assumption (i) can actually be certified by the self-testing step being successful. Here we will show that we can dispense with the second and fourth assumptions.

Basically, we need  to compute the minimum quantum value of the Bell functional of Eq.~\eqref{pseudobell} under the assumption that the self-testing stage of the protocol is successful and assumption (iii) above. From a technical point of view, this means we will consider the particular case of system $\cH_{C'}$ being prepared on the assemblage with elements given by Eq.~\eqref{eq:theSTdass}. First we will focus on the case where $r=1$ in Eq.~\eqref{eq:theSTdass}, i.e., the assemblage elements actually take the form  $\{\widetilde{\sigma}^{\mathsf{R}}_{d|w}\}$ specified in Eq.~\eqref{eq:theassembD}. From the discussion above, one can then conclude that the result also applies to other values of $r$. 

If we take $M_0 = \ket{\Psi^+}\bra{\Psi^+}$, we have already shown that Eq.~\eqref{pseudobell} holds under the assumption of successful self-testing. What remains to be proven is that the same still holds when we optimise the value of the left-hand side of Eq.~\eqref{pseudobell} over any possible value of the measurement operator $M_0$. The proof of this will be presented using the diagrammatic language of Refs.~\cite{coecke_kissinger_2017,kissinger2018paper}. 

First, note that $\mathcal{F}_{abxy} = \sum_{dw} f_{abdxyw} \, \widetilde{\sigma}^{\mathsf{R}}_{d|w}$, and hence the claim for $M_0 = \ket{\Psi^+}\bra{\Psi^+}$ can be diagrammatically expressed as: 
\begin{align}\label{eq:diag1}
&\sum_{a,b,d,x,y,w} f_{abdxyw}p(ab0d|xy\star w) = \frac{1}{2} \sum_{a,b,x,y}\,%
\begin{tikzpicture}
	\begin{pgfonlayer}{nodelayer}
		\node [style=none] (0) at (-2.5, -0.75) {};
		\node [style=none] (1) at (2.5, -0.75) {};
		\node [style=none] (2) at (0, -1.75) {};
		\node [style=none] (4) at (0, -1.25) {$\rho$};
		\node [style=copoint] (5) at (0, 1.25) {$b|y$};
		\node [style=copoint] (6) at (-2, 1.25) {$a|x$};
		\node [style=none] (7) at (2, 0.75) {};
		\node [style=none] (8) at (-2, -0.75) {};
		\node [style=none] (9) at (0, -0.75) {};
		\node [style=none] (10) at (2, -0.75) {};
		\node [style=none] (12) at (1.25, 0.25) {$\cH_C$};
		\node [style=none] (13) at (-2.75, 0.25) {$\cH_A$};
		\node [style=none] (14) at (-0.75, 0.25) {$\cH_B$};
		\node [style=none] (15) at (3, -0.75) {};
		\node [style=none] (16) at (7, -0.75) {};
		\node [style=none] (17) at (5, -2.25) {};
		\node [style=none] (18) at (5, -1.25) {\color{white}$\mathcal{F}_{abxy}$};
		\node [style=none] (19) at (5, 0.75) {};
		\node [style=none] (20) at (5, -0.75) {};
		\node [style=none] (21) at (6, 0.25) {$\cH_{C'}$};
	\end{pgfonlayer}
	\begin{pgfonlayer}{edgelayer}
		\draw (0.center) to (2.center);
		\draw (2.center) to (1.center);
		\draw (1.center) to (0.center);
		\draw [qWire] (8.center) to (6);
		\draw [qWire] (9.center) to (5);
		\draw [qWire] (10.center) to (7.center);
		\draw [fill=black] (15.center)
			 to (17.center)
			 to (16.center)
			 to cycle;
		\draw [qWire] (20.center) to (19.center);
		\draw [qWire, bend left=90] (7.center) to (19.center);
	\end{pgfonlayer}
\end{tikzpicture}} \, \geq 0 \,,
\end{align}
for all quantum systems $\cH_A$ and $\cH_B$, effects $%
\begin{tikzpicture}
	\begin{pgfonlayer}{nodelayer}
		\node [style=none] (0) at (0, 0.5) {};
		\node [style=copoint] (1) at (0, 0.5) {$a|x$};
		\node [style=none] (2) at (0, -1) {};
	\end{pgfonlayer}
	\begin{pgfonlayer}{edgelayer}
		\draw [style=qWire] (1) to (2.center);
	\end{pgfonlayer}
\end{tikzpicture}
}$ and $%
\begin{tikzpicture}
	\begin{pgfonlayer}{nodelayer}
		\node [style=none] (0) at (0, 0.5) {};
		\node [style=copoint] (1) at (0, 0.5) {$b|y$};
		\node [style=none] (2) at (0, -1) {};
	\end{pgfonlayer}
	\begin{pgfonlayer}{edgelayer}
		\draw [style=qWire] (1) to (2.center);
	\end{pgfonlayer}
\end{tikzpicture}
}$, and state $%
\begin{tikzpicture}
	\begin{pgfonlayer}{nodelayer}
		\node [style=none] (0) at (-1.5, 0) {};
		\node [style=none] (1) at (1.5, 0) {};
		\node [style=none] (2) at (0, -1) {};
		\node [style=none] (4) at (0, -0.5) {$\rho$};
		\node [style=none] (5) at (0, 1) {};
		\node [style=none] (6) at (-1, 1) {};
		\node [style=none] (7) at (1, 1) {};
		\node [style=none] (8) at (-1, 0) {};
		\node [style=none] (9) at (0, 0) {};
		\node [style=none] (10) at (1, 0) {};
	\end{pgfonlayer}
	\begin{pgfonlayer}{edgelayer}
		\draw (0.center) to (2.center);
		\draw (2.center) to (1.center);
		\draw (1.center) to (0.center);
		\draw [qWire] (8.center) to (6.center);
		\draw [qWire] (9.center) to (5.center);
		\draw [qWire] (10.center) to (7.center);
	\end{pgfonlayer}
\end{tikzpicture}}$. Notice that in Eq.~\eqref{eq:diag1} the diagram
\begin{align*}
\begin{tikzpicture}
	\begin{pgfonlayer}{nodelayer}
		\node [style=none] (0) at (-2.5, 0) {};
		\node [style=none] (1) at (2.5, 0) {};
		\node [style=none] (2) at (0, -1) {};
		\node [style=none] (4) at (0, -0.5) {$\rho$};
		\node [style=copoint] (5) at (0, 1) {$b|y$};
		\node [style=copoint] (6) at (-2, 1) {$a|x$};
		\node [style=none] (7) at (2, 1) {};
		\node [style=none] (8) at (-2, 0) {};
		\node [style=none] (9) at (0, 0) {};
		\node [style=none] (10) at (2, 0) {};
		\node [style=none] (11) at (2.5, 0.75) {};
		\node [style=none] (12) at (2.75, 0.75) {$\cH_C$};
	\end{pgfonlayer}
	\begin{pgfonlayer}{edgelayer}
		\draw (0.center) to (2.center);
		\draw (2.center) to (1.center);
		\draw (1.center) to (0.center);
		\draw [qWire] (8.center) to (6);
		\draw [qWire] (9.center) to (5);
		\draw [qWire] (10.center) to (7.center);
	\end{pgfonlayer}
\end{tikzpicture}} 
\end{align*}
represents an arbitrary quantum assemblage for the quantum system $\cH_C$ held by Charlie. In addition, in the diagram of Eq.~\eqref{eq:diag1} the operators $\mathcal{F}_{abxy}^\mathsf{T}$ are depicted as black triangles\footnote{In general, possibly-unphysical processes will be depicted with a black background.} since they correspond to Hermitian operators which are not necessarily valid quantum states. 

In the diagrammatic language, then, what we need to show is expressed as: 
\begin{align}\label{eq:diag2}
&\sum_{a,b,d,x,y,w} f_{abdxyw}p(ab0d|xy\star w) = \frac{1}{2} \sum_{a,b,x,y}\,%
\begin{tikzpicture}
	\begin{pgfonlayer}{nodelayer}
		\node [style=none] (0) at (-2.5, -0.75) {};
		\node [style=none] (1) at (2.5, -0.75) {};
		\node [style=none] (2) at (0, -1.75) {};
		\node [style=none] (4) at (0, -1.25) {$\rho$};
		\node [style=copoint] (5) at (0, 1.25) {$b|y$};
		\node [style=copoint] (6) at (-2, 1.25) {$a|x$};
		\node [style=none] (7) at (2, 1) {};
		\node [style=none] (8) at (-2, -0.75) {};
		\node [style=none] (9) at (0, -0.75) {};
		\node [style=none] (10) at (2, -0.75) {};
		\node [style=none] (12) at (1.25, 0.25) {$\cH_C$};
		\node [style=none] (13) at (-2.75, 0.25) {$\cH_A$};
		\node [style=none] (14) at (-0.75, 0.25) {$\cH_B$};
		\node [style=none] (15) at (3, -0.75) {};
		\node [style=none] (16) at (7, -0.75) {};
		\node [style=none] (17) at (5, -2.25) {};
		\node [style=none] (18) at (5, -1.25) {\color{white}$\mathcal{F}_{abxy}$};
		\node [style=none] (19) at (5, 1) {};
		\node [style=none] (20) at (5, -0.75) {};
		\node [style=none] (21) at (6, 0.25) {$\cH_{C'}$};
		\node [style=none] (22) at (1.5, 1) {};
		\node [style=none] (23) at (5.5, 1) {};
		\node [style=none] (24) at (3.5, 2.25) {};
		\node [style=none] (25) at (3.5, 1.5) {};
		\node [style=none] (26) at (3.5, 1.5) {};
		\node [style=none] (27) at (3.5, 1.5) {$M_{0|\star}$};
	\end{pgfonlayer}
	\begin{pgfonlayer}{edgelayer}
		\draw (0.center) to (2.center);
		\draw (2.center) to (1.center);
		\draw (1.center) to (0.center);
		\draw [qWire] (8.center) to (6);
		\draw [qWire] (9.center) to (5);
		\draw [qWire] (10.center) to (7.center);
		\draw [fill=black] (15.center)
			 to (17.center)
			 to (16.center)
			 to cycle;
		\draw [qWire] (20.center) to (19.center);
		\draw (24.center) to (23.center);
		\draw (23.center) to (22.center);
		\draw (22.center) to (24.center);
	\end{pgfonlayer}
\end{tikzpicture}} \, \geq 0 \,,
\end{align}
for all quantum systems $\cH_A$ and $\cH_B$, effects $%
}$, $%
}$, and $%
\begin{tikzpicture}
	\begin{pgfonlayer}{nodelayer}
		\node [style=none] (0) at (0, 0.5) {};
		\node [style=copoint] (1) at (0, 0.5) {$M_{0|\star}$};
		\node [style=none] (2) at (0, -1) {};
	\end{pgfonlayer}
	\begin{pgfonlayer}{edgelayer}
		\draw [style=qWire] (1) to (2.center);
	\end{pgfonlayer}
\end{tikzpicture}
}$, and state $%
\begin{tikzpicture}
	\begin{pgfonlayer}{nodelayer}
		\node [style=none] (0) at (-1.5, 0) {};
		\node [style=none] (1) at (1.5, 0) {};
		\node [style=none] (2) at (0, -1) {};
		\node [style=none] (4) at (0, -0.5) {$\rho$};
		\node [style=none] (5) at (0, 1) {};
		\node [style=none] (6) at (-1, 1) {};
		\node [style=none] (7) at (1, 1) {};
		\node [style=none] (8) at (-1, 0) {};
		\node [style=none] (9) at (0, 0) {};
		\node [style=none] (10) at (1, 0) {};
	\end{pgfonlayer}
	\begin{pgfonlayer}{edgelayer}
		\draw (0.center) to (2.center);
		\draw (2.center) to (1.center);
		\draw (1.center) to (0.center);
		\draw [qWire] (8.center) to (6.center);
		\draw [qWire] (9.center) to (5.center);
		\draw [qWire] (10.center) to (7.center);
	\end{pgfonlayer}
\end{tikzpicture}}$.

What we do now is to manipulate the diagram from Eq.~\eqref{eq:diag2}. The first step is to notice that $\rho'$ defined as
\begin{align}\label{eq:diag3}
\begin{tikzpicture}
	\begin{pgfonlayer}{nodelayer}
		\node [style=none] (0) at (-1.75, -0.75) {};
		\node [style=none] (1) at (3.25, -0.75) {};
		\node [style=none] (2) at (0.75, -1.75) {};
		\node [style=none] (4) at (0.75, -1.25) {$\rho$};
		\node [style=none] (5) at (0.75, 1) {};
		\node [style=none] (6) at (-1.25, 1) {};
		\node [style=none] (7) at (2.75, 1) {};
		\node [style=none] (8) at (-1.25, -0.75) {};
		\node [style=none] (9) at (0.75, -0.75) {};
		\node [style=none] (10) at (2.75, -0.75) {};
		\node [style=none] (12) at (2, 0.25) {$\cH_C$};
		\node [style=none] (13) at (-2, 0.25) {$\cH_A$};
		\node [style=none] (14) at (0, 0.25) {$\cH_B$};
		\node [style=none] (19) at (5.75, 1) {};
		\node [style=none] (20) at (5.75, -0.75) {};
		\node [style=none] (21) at (4.75, 0.25) {$\cH_{C'}$};
		\node [style=none] (22) at (2.25, 1) {};
		\node [style=none] (23) at (6.25, 1) {};
		\node [style=none] (24) at (4.25, 2.25) {};
		\node [style=none] (25) at (4.25, 1.5) {};
		\node [style=none] (26) at (4.25, 1.5) {};
		\node [style=none] (27) at (4.25, 1.5) {$M_{0|\star}$};
		\node [style=none] (28) at (7.75, 1) {};
		\node [style=none] (29) at (7.75, -0.75) {};
		\node [style=none] (30) at (8.75, 0.25) {$\cH_{C}$};
		\node [style=none] (31) at (-9, -0.75) {};
		\node [style=none] (32) at (-4, -0.75) {};
		\node [style=none] (33) at (-6.5, -1.75) {};
		\node [style=none] (34) at (-6.5, -1.25) {$\rho^\prime$};
		\node [style=none] (35) at (-6.5, 1) {};
		\node [style=none] (36) at (-8.5, 1) {};
		\node [style=none] (37) at (-4.5, 1) {};
		\node [style=none] (38) at (-8.5, -0.75) {};
		\node [style=none] (39) at (-6.5, -0.75) {};
		\node [style=none] (40) at (-4.5, -0.75) {};
		\node [style=none] (41) at (-5.25, 0.25) {$\cH_C$};
		\node [style=none] (42) at (-9.25, 0.25) {$\cH_A$};
		\node [style=none] (43) at (-7.25, 0.25) {$\cH_B$};
		\node [style=none] (44) at (-3, 0) {$=$};
	\end{pgfonlayer}
	\begin{pgfonlayer}{edgelayer}
		\draw (0.center) to (2.center);
		\draw (2.center) to (1.center);
		\draw (1.center) to (0.center);
		\draw [qWire] (8.center) to (6.center);
		\draw [qWire] (9.center) to (5.center);
		\draw [qWire] (10.center) to (7.center);
		\draw [qWire] (20.center) to (19.center);
		\draw (24.center) to (23.center);
		\draw (23.center) to (22.center);
		\draw (22.center) to (24.center);
		\draw [qWire] (29.center) to (28.center);
		\draw [qWire, bend right=90, looseness=1.25] (20.center) to (29.center);
		\draw (31.center) to (33.center);
		\draw (33.center) to (32.center);
		\draw (32.center) to (31.center);
		\draw [qWire] (38.center) to (36.center);
		\draw [qWire] (39.center) to (35.center);
		\draw [qWire] (40.center) to (37.center);
	\end{pgfonlayer}
\end{tikzpicture}}
\end{align}
is a valid quantum state, though possibly unnormalised. 

Hence, 
\begin{align}\label{eq:diag4}
&\sum_{a,b,x,y} %
\begin{tikzpicture}
	\begin{pgfonlayer}{nodelayer}
		\node [style=none] (0) at (-2.5, -0.75) {};
		\node [style=none] (1) at (2.5, -0.75) {};
		\node [style=none] (2) at (0, -1.75) {};
		\node [style=none] (4) at (0, -1.25) {$\rho$};
		\node [style=copoint] (5) at (0, 1.25) {$b|y$};
		\node [style=copoint] (6) at (-2, 1.25) {$a|x$};
		\node [style=none] (7) at (2, 1) {};
		\node [style=none] (8) at (-2, -0.75) {};
		\node [style=none] (9) at (0, -0.75) {};
		\node [style=none] (10) at (2, -0.75) {};
		\node [style=none] (12) at (1.25, 0.25) {$\cH_C$};
		\node [style=none] (13) at (-2.75, 0.25) {$\cH_A$};
		\node [style=none] (14) at (-0.75, 0.25) {$\cH_B$};
		\node [style=none] (15) at (3, -0.75) {};
		\node [style=none] (16) at (7, -0.75) {};
		\node [style=none] (17) at (5, -2.25) {};
		\node [style=none] (18) at (5, -1.25) {\color{white}$\mathcal{F}_{abxy}$};
		\node [style=none] (19) at (5, 1) {};
		\node [style=none] (20) at (5, -0.75) {};
		\node [style=none] (21) at (6, 0.25) {$\cH_{C'}$};
		\node [style=none] (22) at (1.5, 1) {};
		\node [style=none] (23) at (5.5, 1) {};
		\node [style=none] (24) at (3.5, 2.25) {};
		\node [style=none] (25) at (3.5, 1.5) {};
		\node [style=none] (26) at (3.5, 1.5) {};
		\node [style=none] (27) at (3.5, 1.5) {$M_{0|\star}$};
	\end{pgfonlayer}
	\begin{pgfonlayer}{edgelayer}
		\draw (0.center) to (2.center);
		\draw (2.center) to (1.center);
		\draw (1.center) to (0.center);
		\draw [qWire] (8.center) to (6);
		\draw [qWire] (9.center) to (5);
		\draw [qWire] (10.center) to (7.center);
		\draw [fill=black] (15.center)
			 to (17.center)
			 to (16.center)
			 to cycle;
		\draw [qWire] (20.center) to (19.center);
		\draw (24.center) to (23.center);
		\draw (23.center) to (22.center);
		\draw (22.center) to (24.center);
	\end{pgfonlayer}
\end{tikzpicture}} = \sum_{a,b,x,y}%
\begin{tikzpicture}
	\begin{pgfonlayer}{nodelayer}
		\node [style=none] (0) at (-2.5, -0.75) {};
		\node [style=none] (1) at (2.5, -0.75) {};
		\node [style=none] (2) at (0, -1.75) {};
		\node [style=none] (4) at (0, -1.25) {$\rho^\prime$};
		\node [style=copoint] (5) at (0, 1.25) {$b|y$};
		\node [style=copoint] (6) at (-2, 1.25) {$a|x$};
		\node [style=none] (7) at (2, 0.75) {};
		\node [style=none] (8) at (-2, -0.75) {};
		\node [style=none] (9) at (0, -0.75) {};
		\node [style=none] (10) at (2, -0.75) {};
		\node [style=none] (12) at (1.25, 0.25) {$\cH_C$};
		\node [style=none] (13) at (-2.75, 0.25) {$\cH_A$};
		\node [style=none] (14) at (-0.75, 0.25) {$\cH_B$};
		\node [style=none] (15) at (3, -0.75) {};
		\node [style=none] (16) at (7, -0.75) {};
		\node [style=none] (17) at (5, -2.25) {};
		\node [style=none] (18) at (5, -1.25) {\color{white}$\mathcal{F}_{abxy}$};
		\node [style=none] (19) at (5, 0.75) {};
		\node [style=none] (20) at (5, -0.75) {};
		\node [style=none] (21) at (6, 0.25) {$\cH_{C'}$};
	\end{pgfonlayer}
	\begin{pgfonlayer}{edgelayer}
		\draw (0.center) to (2.center);
		\draw (2.center) to (1.center);
		\draw (1.center) to (0.center);
		\draw [qWire] (8.center) to (6);
		\draw [qWire] (9.center) to (5);
		\draw [qWire] (10.center) to (7.center);
		\draw [fill=black] (15.center)
			 to (17.center)
			 to (16.center)
			 to cycle;
		\draw [qWire] (20.center) to (19.center);
		\draw [qWire, bend left=90] (7.center) to (19.center);
	\end{pgfonlayer}
\end{tikzpicture}} \geq 0
\end{align}
for all quantum systems $\cH_A$ and $\cH_B$, effects $%
}$, $%
}$, and $%
}$, and state $%
\begin{tikzpicture}
	\begin{pgfonlayer}{nodelayer}
		\node [style=none] (0) at (-1.5, 0) {};
		\node [style=none] (1) at (1.5, 0) {};
		\node [style=none] (2) at (0, -1) {};
		\node [style=none] (4) at (0, -0.5) {$\rho$};
		\node [style=none] (5) at (0, 1) {};
		\node [style=none] (6) at (-1, 1) {};
		\node [style=none] (7) at (1, 1) {};
		\node [style=none] (8) at (-1, 0) {};
		\node [style=none] (9) at (0, 0) {};
		\node [style=none] (10) at (1, 0) {};
	\end{pgfonlayer}
	\begin{pgfonlayer}{edgelayer}
		\draw (0.center) to (2.center);
		\draw (2.center) to (1.center);
		\draw (1.center) to (0.center);
		\draw [qWire] (8.center) to (6.center);
		\draw [qWire] (9.center) to (5.center);
		\draw [qWire] (10.center) to (7.center);
	\end{pgfonlayer}
\end{tikzpicture}}$, where the last step (the claim that $\geq 0$) follows from the fact that Eq.~\eqref{eq:diag1} is valid for all quantum states $\rho$. This completes the proof of our claim.

Regarding assumption (iii) above, we can ensure that it holds by assuming that the system Charlie shares with Dani is independent of the system Charlie shares with Alice and Bob. Furthermore, if we assume that $r\in\{0,1\}$ for Eq.~\eqref{eq:theSTdass} after self-testing, then we can show that assumption (iii) follows. We conjecture that assumption (iii) follows in the more general case where $r\in[0,1]$ in Eq.~\eqref{eq:theSTdass}. Proving this is left for future work.

\section{Activation of post-quantum steering for higher-dimensional assemblages}\label{ap:highdim}

In this Appendix we discuss how to extend the methods from the main paper so they apply to assemblages in a Hilbert space of dimension higher than two. We begin by stating the three main points of departure from the qubit protocol, based on the work presented in Ref.~\cite{bowles2018device,bowles2018self}, and afterwards present the qudit protocol.

\bigskip

Consider the setup of Fig.~\ref{fig:bowlset}, and let $\cH_C \equiv \cH_d$, where the dimension is $d > 2$. The first step is to embed $\cH_d$ within a Hilbert space with dimension `power of 2'. That is, we will think of a qudit from $\cH_d$ as living in a Hilbert space that arises from the parallel composition of many qubits. We denote by $n$ the number of qubits, from which follows that $n \geq \log_2 d$. The assemblage between Charlie and Dani with elements $\{\sigma_{d|w}^{\mathsf{R}}\}$ that we'll aim to self-test will be shown to belong to a Hilbert space of dimension $2^n$.

\bigskip

The second step follows from  noticing that the $n$-fold tensor products of Pauli matrices and the qubit identity matrix form a basis for the space of Hermitian operators in $\cH_{2^n}$. For example, for $n=3$, the such a basis is $\{O_1 \otimes O_2 \otimes O_3 \, \vert \, O_1,O_2,O_3 \in \{\id_2,X,Y,Z\} \}$. From the qubit case discussed in the main text one learns that what we need to self-test in $\cH_{C'}$ is an assemblage that includes the elements of a basis for the space of Hermitian operators in $\cH_{d}$. After the embedding has been performed, then, one can achieve this by self-testing tensor products of Pauli operators. Hence, this second step pertains to framing the high-dimensional self-testing question as a protocol to self-test tensor product of Paulis (or more precisely, the tensor product of their eigenvectors). 

Once the elements of the assemblage $\{\sigma_{d|w}^{\mathsf{R}}\}$ are self-tested to be the tensor product of Pauli eigenvectors, one may express the Hermitian operators $\mathcal{F}_{abxy}$ from a generic steering functional as $\mathcal{F}_{abxy} = \sum_{d,w} f_{abdxyw} \, \widetilde{\sigma}_{d|w}^{\mathsf{R}}$\,, where the assemblage elements $\widetilde{\sigma}_{d|w}^{\mathsf{R}}$ are indeed the products of Pauli eigenvalues without the contributions that come from their transpose (see next point). Similarly to the qubit case, the Bell-type functional to evaluate reads
\begin{align}\label{pseudobell2}
    \mathcal{I}[\textbf{p}] = \sum_{abdxyw} f_{abdxyw} \, p(ab0d|xy\star w)\,.
\end{align}

\bigskip

The third step pertains to the issue of self-testing discussed in the main text: one cannot self-test a state, but rather a convex combination of a state and its transpose. This was not an issue for the qubit case, since it didn't give rise to false-positive outcomes in the protocol (i.e., a negative value of Eq.~\eqref{pseudobell} certifies post-quantumness of the assemblage shared between Alice, Bob, and Charlie for any value of the unknown parameter $r$ in Eq.~\eqref{eq:theSTdass}). However, for the case of qudits, this needs more consideration. Let us illustrate it for the case of $n=2$. Here, by performing a parallel self-testing of qubit states (that is, two self-testing protocols in parallel of one qubit each), one can conclude that the state being self-tested has the form $(r \rho_1 + (1-r) \rho_1^\mathsf{T}) \otimes (r' \rho_2 + (1-r') \rho_2^\mathsf{T})$, where $\rho_1 \otimes \rho_2$ is the target state we aimed to self-test, and $r,r' \in [0,1]$ are unknown parameters. Now, self-testing a state that corresponds to a convex combination of $\rho_1 \otimes \rho_2$ and $\rho_1^\mathsf{T} \otimes \rho_2^\mathsf{T}$ will not create false-positives, i.e., will not enable quantum assemblages to produce negative values in the Bell-type functional, just like was argued for the case of qubits in the main text. However, should the state of the system indeed contain terms corresponding to $\rho_1 \otimes \rho_2^\mathsf{T}$ and $\rho_1^\mathsf{T} \otimes \rho_2$, then one wouldn't be able to conclude post-quantum steering when observing a negative value of the Bell functional. Hence, one needs to take extra care to ``align'' the reference frames for each self-test, so that when one test self-tests the target state $\rho_1$ then the other protocols for all the other qubits also self-test the target state, and when one self-test $\rho_1^\mathsf{T}$ for the first qubit, then all the other self-testing protocols will also self-test the transpose of their corresponding target states. To summarise, we need to complement the self-testing protocol for single-qubits described in the main text with something extra so that the state that we successfully self-test (for each assemblage element) has the form $r (\rho_1 \otimes \cdots \otimes \rho_n) + (1-r) (\rho_1 \otimes \cdots \otimes \rho_n)^\mathsf{T}$. A detailed protocol to perform such `synchronised' self-testing was presented in Ref.~\cite{bowles2018self}, and can be straightforwardly applied here. When doing so, then, the input and output sets of Charlie and Dani in the scenario of Fig.~\ref{fig:bowlset} have a specific form. In the case of Charlie, his measurements are labelled by
\begin{align*}
    z \in \{1,2,3,4,5,6\}^{\times n} \cup \{\star, \lozenge, \blacklozenge\} =: \mathbb{Z}\,,
\end{align*}
where intuitively the values $\{1,2,3,4,5,6\}^{\times n}$ are use for self-testing,  the values $\{\lozenge, \blacklozenge\}$ are used to `align the reference frames', and the value $\star$ is used to evaluate the Bell-type functional in Eq.~\eqref{pseudobell2}. The set of outputs for Charlie's measurements have a specific form that is not worth explicitly mentioning here, beyond the fact that measurement $z=\star$ has only two outcomes $c \in \{0,1\}$. In the case of Dani, it is useful to label his measurement choices as the array $w \in \{1,2,3\}^{\times n}$ and his outputs as $d \in \{0,1\}^{\times n}$.

\bigskip

Finally, once the self-testing step from Ref.~\cite{bowles2018self} has been successful, the linear functional of Eq.~\eqref{pseudobell2} may be expressed as: 
\begin{align}\label{eq:final}
    \mathcal{I}[p] = \tfrac{r}{2^n} \sum_{abdxyw} f_{abdxyw} \,  &\Tr{\left(\bigotimes_{i=1}^n \widetilde{\sigma}^{\mathsf{R}}_{d_i|w_i}\right)^{\mathsf{T}}  \sigma_{ab|xy} }{\cH_C} \\ &+ \tfrac{1-r}{2^n} \sum_{abdxyw} f_{abdxyw} \,  \Tr{\bigotimes_{i=1}^n \widetilde{\sigma}^{\mathsf{R}}_{d_i|w_i} \,  \sigma_{ab|xy} }{\cH_C}\,, \nonumber
\end{align}
where we have taken Charlie's measurement operator for $z=\star$ and $c=0$ to be $M_{0} = \ket{\Psi_n}\bra{\Psi_n}$ with $\ket{\Psi_n}=\frac{1}{\sqrt{2^n}} \sum_{k=0}^{2^n-1} \ket{kk}_{CC'}$. A similar reasoning to that for the qubit case in the main text follows: (i) the first term in Eq.~\eqref{eq:final} will be positive by definition for all quantum assemblages $\{\sigma_{ab|xy}\}$, (ii) the second term in Eq.~\eqref{eq:final} will be positive for all quantum assemblages $\{\sigma_{ab|xy}\}$ since $\sum_{abdxyw} f_{abdxyw} \,  \Tr{\left(\bigotimes_{i=1}^n \widetilde{\sigma}^{\mathsf{R}}_{d_i|w_i}\right)^{\mathsf{T}}  \,  \left(\sigma_{ab|xy}\right)^{\mathsf{T}}  }{\cH_C}$ is positive for any quantum assemblage with elements $\{\left(\sigma_{ab|xy}\right)^{\mathsf{T}} \}$, and the assemblage $\{\left(\sigma_{ab|xy}\right)^{\mathsf{T}} \}$ is quantum if the assemblage  $\{\sigma_{ab|xy}\}$ is quantum \cite[Sec.~5.1]{sainz2018formalism}. 

One can next get rid of the assumption that $M_{0} = \ket{\Psi_n}\bra{\Psi_n}$ following the same techniques presented in the Appendix for the case of qubits. It then follows that, following a successful self-test, a negative value of the functional of Eq.~\eqref{pseudobell2} certifies that the assemblage shared between Alice, Bob, and Charlie has no quantum realisation.

\section{Derivation of the Independence Assumption for $r \in \{0,1\}$}\label{ap:indasump}

Here we discuss how the form $\mathbb{\Sigma}_{AB}\otimes \mathbb{\Sigma_{D}}$
for the assemblage $\mathbb{\Sigma}_{ABC}$ can be derived for the cases where $r \in \{0,1\}$ in the self-testing argument in Eq.~\eqref{eq:theSTdass}. First we need to establish that when $r \in \{0,1\}$ in Eq.~\eqref{eq:theSTdass}, the assemblage is extremal, i.e. it cannot be written as a convex combination of other assemblages. We will give the argument for $r=0$, since the argument for $r=1$ is  the same up to obvious changes. For clarity, when $r=0$ we relabel the assemblage elements in Eq.~\eqref{eq:theSTdass} as $\sigma_{d|w}^{*}:=\tilde{\sigma}^{\mathsf{R}}_{d|w}\otimes|0\rangle\langle 0|$. 

To show that the assemblage with elements $\sigma_{d|w}^{*}$ is extremal, we can construct a steering expression, which when equal to $0$, intersects with a single assemblage, ensuring extremality. In other words, since the set of bipartite non-signalling assemblages (equivalently quantum assemblages) is convex and bounded, if a hyperplane intersects this convex body at one point, then this point is extremal in that convex body. The steering expression constructed from assemblage elements $\sigma^*_{d|w}$ is
\begin{equation}
\sum_{d,w}\textrm{tr}\left(F_{dw}\sigma_{d|w}\right)
\end{equation}
where $F_{dw}=2\sigma^*_{d\oplus 1|w}+\mathbb{I}\otimes|1\rangle\langle 1|$. It remains to show that when this expression is equal to $0$, then $\sigma_{d|w}=\sigma^*_{d|w}$ for all elements $\sigma_{d|w}$. 

First note that the expression must be non-negative for all non-signalling assemblages, because $F_{dw}$ is a rank one projector, $\textrm{tr}\left(F_{dw}\sigma_{d|w}\right)\geq 0$, since it is a probability. Thus, a sum of non-negative terms is non-negative, and the only way the sum can be zero is if every term is equal to zero. If $\textrm{tr}\left(F_{dw}\sigma_{d|w}\right)=0$, then $\sigma_{d|w}$ is in the kernel of the projector $F_{dw}=2\sigma^*_{d\oplus 1|w}+\mathbb{I}\otimes|1\rangle\langle 1|$, thus $\sigma_{d|w}=\alpha\sigma^{*}_{d|w}$ for a non-negative real number $\alpha$. Note that $\sigma_{d|w}+\sigma_{d\oplus 1|w}=\alpha\sigma^{*}_{d|w}+\alpha'\sigma^{*}_{d\oplus 1|w}=\rho_{C}$, which is the density operator of Charlie's system. Thus, $\alpha+\alpha'=2$, and for the three different inputs $w$, we have $\sigma_{d|w}+\sigma_{d\oplus 1|w}=\sigma_{d|w'}+\sigma_{d\oplus 1|w'}=\rho_{C}$. The only density operator $\rho_{C}$ that will satisfy these constraints is $\frac{1}{2}\mathbb{I}\otimes|0\rangle\langle0|$, thus implying $\alpha=1$ for all assemblage elements $\sigma_{d|w}=\alpha\sigma^{*}_{d|w}$. We can conclude then that the assemblage with elements $\sigma^{*}_{d|w}$ is extremal. Analogously we can show that for $r = 1$ in Eq.~\eqref{eq:theSTdass} is also extremal.

Equipped with the fact that the assemblages are extremal, we can prove that Charlie's two systems are independent in the following theorem.

\begin{thm} 
 Consider an assemblage with elements $\{\sigma_{abd|xyw}\}$ prepared in Charlie's Hilbert space $\cH_{C}\otimes\cH_{C'}$ after outputs produced by Alice, Bob, and Dani. If the following are satisfied by this assemblage:

\begin{compactitem}
    \item[(i)] $\forall x, y$, $\sigma_{d|w}=\textrm{tr}_{C'}\left\{\sum_{ab}\sigma_{abd|xyw}\right\}$,
    \item[(ii)] $\sum_{ab}\sigma_{abd|xyw}= \psi\otimes \sigma_{d|w}$, where $\{\sigma_{d|w}\}$ are elements of an extremal assemblage and $\psi$ corresponds to a pure state in $\cH_{C'}$.
\end{compactitem}

For such an assemblage, it then follows that 
\begin{align}
\sigma_{abd|xyw}=\sigma_{ab|xy}\otimes\sigma_{d|w}
\end{align}
\end{thm}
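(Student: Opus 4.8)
The plan is to use the two hypotheses for two different purposes: purity of $\psi$ to peel off one tensor factor of each assemblage element pointwise, and extremality of $\{\sigma_{d|w}\}$ to rigidify what remains. Throughout I treat $\sigma_{abd|xyw}$ as a non-signalling assemblage on the two tensor factors of Charlie's space, one factor carrying the pure state $\psi$ and the other carrying the extremal assemblage $\{\sigma_{d|w}\}$; condition (i) is what guarantees that the $\sigma_{d|w}$ appearing in (ii) is genuinely the reduced (Dani) marginal, so the two hypotheses are mutually consistent.

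First I would invoke operator domination. For each fixed $a,b,d,x,y,w$, the element $\sigma_{abd|xyw}$ is a single nonnegative summand of $\sum_{a'b'}\sigma_{a'b'd|xyw}=\psi\otimes\sigma_{d|w}$ by hypothesis (ii), so $0\le\sigma_{abd|xyw}\le\psi\otimes\sigma_{d|w}$. Because $\psi$ is a rank-one projector, the support of $\psi\otimes\sigma_{d|w}$ lies in $\mathrm{ran}(\psi)\otimes\cH$, a subspace whose $\psi$-factor is one-dimensional, and domination forces $\mathrm{supp}(\sigma_{abd|xyw})$ into that same subspace. Hence $\sigma_{abd|xyw}=\psi\otimes\tau_{abd|xyw}$ for positive operators $\tau_{abd|xyw}$, and summing over $a,b$ and matching with (ii) gives $\sum_{ab}\tau_{abd|xyw}=\sigma_{d|w}$. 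This is the only step where purity is essential.

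Next I would promote the $\tau$'s to bona fide assemblages and apply extremality. Fixing $a,b,x,y$, the family $\{\tau_{abd|xyw}\}_{d,w}$ is positive, and the non-signalling of the full four-party assemblage (in particular, that Dani cannot signal, so $\sum_d\sigma_{abd|xyw}$ is independent of $w$) transfers to $\sum_d\tau_{abd|xyw}$ being independent of $w$. Writing $p(ab|xy):=\tr\{\sum_d\tau_{abd|xyw}\}$ and discarding the zero-probability outcomes (for which $\tau_{abd|xyw}=0$ trivially), each $\hat\tau^{ab}_{d|w}:=\tau_{abd|xyw}/p(ab|xy)$ is a normalised non-signalling assemblage. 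Since $\tr\{\sum_d\sigma_{d|w}\}=1$ we have $\sum_{ab}p(ab|xy)=1$, so the identity $\sum_{ab}\tau_{abd|xyw}=\sigma_{d|w}$ reads $\sigma_{d|w}=\sum_{ab}p(ab|xy)\,\hat\tau^{ab}_{d|w}$, a genuine convex combination. Extremality of $\{\sigma_{d|w}\}$ in the convex set of non-signalling (equivalently quantum) assemblages then forces $\hat\tau^{ab}_{d|w}=\sigma_{d|w}$ for every $ab$ with $p(ab|xy)>0$, i.e.\ $\tau_{abd|xyw}=p(ab|xy)\,\sigma_{d|w}$. Substituting back gives $\sigma_{abd|xyw}=p(ab|xy)\,\psi\otimes\sigma_{d|w}=\sigma_{ab|xy}\otimes\sigma_{d|w}$ with $\sigma_{ab|xy}:=p(ab|xy)\,\psi$, which one verifies is exactly the Alice--Bob marginal assemblage, completing the argument.

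I expect the main obstacle to be the middle step: establishing that the conditional families $\{\hat\tau^{ab}_{d|w}\}$ really are legitimate non-signalling assemblages, so that extremality can be brought to bear. This hinges on correctly transporting the non-signalling constraint down to the $\tau$'s, handling the normalisation by $p(ab|xy)$ and the $p(ab|xy)=0$ cases, and ensuring extremality is taken in the right convex set. The support/domination step, by contrast, is routine but genuinely relies on $\psi$ being pure --- for a mixed $\psi$ the one-dimensional-factor argument collapses, which is presumably why the theorem restricts to this regime (and why the general $r\in[0,1]$ case is left as a conjecture).
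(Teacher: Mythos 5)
Your argument is correct, and it rests on the same two pillars as the paper's proof --- extremality of $\{\sigma_{d|w}\}$ and purity of $\psi$ --- but it applies them in the opposite order, which turns out to matter. The paper first applies extremality to the convex decomposition $\sigma_{d|w}=\sum_{\lambda}p(\lambda)\,\textrm{tr}_{C'}\{\sigma_{d|w\lambda}\}$ with $\lambda=abxy$, deduces $\textrm{tr}_{C'}\{\sigma_{abd|xyw}\}=p(ab|xy)\,\sigma_{d|w}$, and then writes ``and thus $\sigma_{abd|xyw}=\sigma'_{abd|xyw}\otimes\sigma_{d|w}$''; for a general extremal assemblage that inference does not follow from a partial-trace identity alone (it would require each $\sigma_{d|w}$ to be rank one, or some further input). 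You avoid this by extracting the tensor-product structure first: the operator domination $0\le\sigma_{abd|xyw}\le\psi\otimes\sigma_{d|w}$ together with $\psi$ being rank one forces $\sigma_{abd|xyw}=\psi\otimes\tau_{abd|xyw}$ outright, which cleanly isolates where purity is used. From there your extremality step is essentially identical to the paper's: since $\textrm{tr}\{\psi\}=1$, your $\tau_{abd|xyw}$ coincides with $\textrm{tr}_{C'}\{\sigma_{abd|xyw}\}$, so the conditional assemblages $\hat\tau^{ab}_{d|w}$ you feed into extremality are the same objects the paper uses (you condition on $a,b$ at fixed $x,y$, while the paper additionally averages over $x,y$; both are legitimate convex decompositions, and your checks of positivity, normalisation, no-signalling in $w$, and the $p(ab|xy)=0$ cases are exactly what membership in the convex set requires). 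The net effect is that your ordering converts the paper's asserted factorisation, and its subsequent claim that $\sigma'_{abd|xyw}$ is independent of $d$ and $w$, into short support arguments, at no loss of generality for the theorem as stated.
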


\begin{proof}
First we prove that $\sigma_{abd|xyw}=\sigma'_{abd|xyw}\otimes \sigma_{d|w}$ for some appropriate assemblage elements $\{\sigma'_{abd|xyw}\}$. To do so we introduce notation 
\begin{align*}
\sigma_{d|abxyw}:=& \frac{\sigma_{abd|xyw}}{p(ab|xyw)}=\frac{\sigma_{abd|xyw}}{p(ab|xy)},
\end{align*}
where the second equality holds by the no-signalling condition. Using this notation it follows that:
\begin{align*}
\sigma_{d|w}=&\textrm{tr}_{C'}\left\{\sum_{ab}\sigma_{abd|xyw}\right\}
=\textrm{tr}_{C'}\left\{\sum_{ab}p(ab|xy)\sigma_{d|abxyw}\right\} \\
&=\textrm{tr}_{C'}\left\{\sum_{xy}\frac{1}{XY}\sum_{ab}\frac{p(abxy)}{p(xy)}\sigma_{d|abxyw}\right\}
=\sum_{\lambda}p(\lambda)\textrm{tr}_{C'}\left\{\sigma_{d|w\lambda}\right\}\,, 
\end{align*}
where we have relabeled the variables as $\lambda := abxy$, and for simplicity that $p(xy)=\frac{1}{XY}$.

By assumption, the assemblage with elements $\{\sigma_{d|w}\}$ is extremal, and hence it must happen that 
\begin{align*}
    \sigma_{d|w} = \textrm{tr}_{C'}\left\{\sigma_{d|w\lambda}\right\} \quad \forall \lambda\,.
\end{align*}
Therefore, 
\begin{align*}
    \sigma_{d|w} = \frac{1}{p(ab|xy)} \textrm{tr}_{C'}\left\{\sigma_{abd|xyw}\right\} \quad \forall \, a,b,x,y\,.
\end{align*}
and thus
\begin{align*}
\sigma_{abd|xyw}=\sigma_{abd|xyw}'\otimes \sigma_{d|w}\,,
\end{align*}
where the elements $\{\sigma_{d|w}\}$ belong to operators acting on $\cH_{C}$, and the elements $\{\sigma_{abd|xyw}'\}$ belong to operators on $\cH_{C'}$. 

 Now the final step is to show that $\sigma_{abd|xyw}' = \sigma_{ab|xy}$.
For this, notice that 
\begin{align*}
    \sum_{ab} \sigma_{abd|xyw} =\left( \sum_{ab} \sigma_{abd|xyw}' \right)\otimes \sigma_{d|w} = \psi \otimes \sigma_{d|w} \,,
\end{align*}
where the last equality follows from property (ii) in the statement of the Theorem. Therefore, $\sum_{ab} \sigma_{abd|xyw}'$ must be independent of $d$ and $w$, which implies that $\sigma_{abd|xyw}'$ is independent from $d$ and $w$. Since, moreover, $\sigma_{ab|xw}=\textrm{tr}_{C}\left\{\sum_{d}\sigma_{abd|xyw}\right\}$, it follows that $\sigma_{abd|xyw}' = \sigma_{ab|xy}$.

\end{proof}
\end{document}